\documentclass[journal,onecolumn]{IEEEtran}

\usepackage{cite}
\usepackage{amsmath,amssymb,amsfonts}
\usepackage{graphicx}
\usepackage{url}
\usepackage{textcomp}
\usepackage{xcolor}
\usepackage{physics}
\usepackage{mathrsfs}
\usepackage[caption=false,font=footnotesize]{subfig}
\usepackage{hyperref}
\hypersetup{hidelinks=true}
\allowdisplaybreaks

\newtheorem{thm}{Theorem}
\newtheorem{prop}{Proposition}

\newtheorem{lem}{Lemma}

\newtheorem{assum}{Assumption}
\newtheorem{rem}{Remark}
\newenvironment{pf}{\begin{IEEEproof}}{\end{IEEEproof}}

\newcommand{\sgp}{\sigma^+}
\newcommand{\sgn}{\sigma^-}
\newcommand{\delsgp}{\delta_{\sigma^+_j}}
\newcommand{\delsgn}{\delta_{\sigma^-_j}}
\newcommand{\Kuu}{K^{uu}}
\newcommand{\Kuv}{K^{uv}}
\newcommand{\Kvu}{K^{vu}}
\newcommand{\Kvv}{K^{vv}}
\newcommand{\hKuu}{\widehat{K}^{uu}}
\newcommand{\hKuv}{\widehat{K}^{uv}}
\newcommand{\hKvu}{\widehat{K}^{vu}}
\newcommand{\hKvv}{\widehat{K}^{vv}}
\newcommand{\tKuu}{\widetilde{K}^{uu}}
\newcommand{\tKuv}{\widetilde{K}^{uv}}
\newcommand{\tKvu}{\widetilde{K}^{vu}}
\newcommand{\tKvv}{\widetilde{K}^{vv}}
\newcommand{\Laa}{L^{\alpha\alpha}}
\newcommand{\Lab}{L^{\alpha\beta}}
\newcommand{\Lba}{L^{\beta\alpha}}
\newcommand{\Lbb}{L^{\beta\beta}}

\newcommand{\heone}{\widehat e_{j,1}}
\newcommand{\hetwo}{\widehat e_{j,2}}
\newcommand{\rone}{r^\epsilon_{j,1}}
\newcommand{\rtwo}{r^\epsilon_{j,2}}
\newcommand{\inner}[2]{\left\langle #1, #2 \right\rangle}
\newcommand{\e}{\mathrm{e}}

\def\BibTeX{{\rm B\kern-.05em{\sc i\kern-.025em b}\kern-.08em
    T\kern-.1667em\lower.7ex\hbox{E}\kern-.125emX}}

\begin{document}
\title{Event-Triggered Gain Scheduling of $2\times 2$ Linear  Hyperbolic PDEs via Neural Operators~(Full Version)}
\author{Yihuai Zhang, Jean Auriol, Nicolas Espitia, and Huan Yu
\thanks{This work was supported by the National Science Foundation of China under Grant 12526214 and 62203131. The work of the second and third authors was partially funded by the Agence Nationale de la Recherche (ANR) via grants PANOPLY ANR-23-CE48-0001-01 and  PH-DIPSY ANR-24-CE48-1712. Corresponding author: Huan Yu.}
\thanks{Yihuai Zhang and Huan Yu are with the Hong Kong University of Science and Technology (Guangzhou), Thrust of Intelligent Transportation, Guangzhou, China (e-mail: yzhang169@connect.hkust-gz.edu.cn; huanyu@hkust-gz.edu.cn).}
\thanks{Jean Auriol is with Universit\'{e} Paris-Saclay, CNRS, CentraleSup\'{e}lec, Laboratoire des Signaux et Syst\`{e}mes, 91190, Gif-sur-Yvette, France (e-mail: jean.auriol@centralesupelec.fr).}
\thanks{Nicolas Espitia is with Univ. Lille, CNRS, Centrale Lille, UMR 9189 CRIStAL, F-59000 Lille, France (e-mail: nicolas.espitia-hoyos@univ-lille.fr).}}

\maketitle

\begin{abstract}
This paper introduces a new framework for event-triggered gain scheduling applied to linear hyperbolic Partial Differential Equations (PDEs) with time- and space-varying coefficients. The approach leverages neural operators to address the challenges of real-time control in such systems. At each triggering time, the control input is designed using the classical static backstepping control law, while the gains of the boundary controller are updated according to the triggering mechanism and the spatial variation of the coefficients. Neural operators are employed to learn the mapping between the system parameters in the PDEs and the corresponding backstepping kernels. By integrating neural operators into the event-triggered framework, we eliminate the need to repeatedly solve complex kernel equations at every triggering instant, thereby reducing computational overhead while ensuring closed-loop stability. The proposed method is validated through theoretical analysis and numerical simulations,  demonstrating its effectiveness and strong potential for real-time control of time-varying hyperbolic PDE systems.
\end{abstract}

\begin{IEEEkeywords}
Partial differential equations(PDEs), Backstepping, Event-triggered gain scheduling, Neural operators.
\end{IEEEkeywords}

\section{Introduction}
Boundary control of hyperbolic Partial Differential Equations (PDEs) is widely applied to engineering  problems that require point actuation for spatial-temporal stabilization, such as oil drilling~\cite{wang2020delay}, traffic flow~\cite{yu2022traffic}, gas pipes~\cite{bastin2016stability}. PDE backstepping achieves Lyapunov stabilization by Volterra spatial transformation and then eliminates destabilizing in-domain terms by boundary feedback controller design~\cite{krstic2008boundary}.  {For $2\times2$ hyperbolic systems, this methodology is supported by constructive kernel equations and explicit boundary feedback designs~\cite{vazquez_backstepping_2011}; see also the recent survey~\cite{vazquez2026backstepping}.} A major challenge arises when the PDE coefficients depend on time and space. In such cases, the backstepping kernels and the corresponding controller gains inherit the time dependence of the coefficients and, in principle, must be updated online to preserve stability. This requirement creates a significant implementation burden: recomputing kernels continuously entails solving coupled kernel PDEs repeatedly which leads to increasing computational complexity. Compared with the extensive literature on time invariant hyperbolic PDEs, fewer results are available for backstepping control of time-varying hyperbolic PDEs, with existing studies such as observer design for hyperbolic Partial Integro-Differential Equation~\cite{deutschmann2016backstepping}, stabilization and tracking~\cite{anfinsen2020stabilization},as well as finite-time boundary stabilization for general hyperbolic systems of balance laws with both time- and space-varying coefficients \cite{CoronBoundaryStabilization2021}.  {Related work also includes infinite-dimensional backstepping control for time-variant one-dimensional hyperbolic PDEs~\cite{anfinsen2018control}. These contributions demonstrate that time-varying coefficients can be handled rigorously, while also highlighting the additional analytical and computational complexity of time-dependent kernel equations.}

However, the kernel equations for the time-varying case are more complex than the time-invariant case. They may not admit closed-form solutions. Hence, a key challenge for real-time implementation is the need to solve time-varying kernel equations online. This problem was addressed in \cite{espitia_event-triggered_2025} which introduced an event-triggered gain scheduling (ETGS) strategy for $2\times2$ linear hyperbolic PDEs with time and space varying in-domain coupling coefficients. The key idea is to freeze the time-varying coefficients at triggering instants and, between events, update the kernels and apply a static (space-dependent) backstepping controller computed for the frozen coefficients. That approach drew inspiration by the framework of Event-triggered-Control (ETC), namely event-triggered  PDE backstepping method \cite{espitia_event-based_2018}  (see some recent contributions on ETC e.g. \cite{wang2021event,Somathilake2025,Rathnayake2026} and references therein, in the context of \textit{hyperbolic} PDEs).  {More broadly, event-based control for hyperbolic conservation laws was studied in~\cite{espitia2016event}, and related event-triggered designs have been developed for parabolic PDEs, PDE--ODE cascades, Stefan problems, and traffic-flow models~\cite{wang2022event,rathnayake2024observer,rathnayake2024observer2,rathnayake2025performance,zhang2025performance,zhang2025event,espitia2022traffic}. These works show that triggering mechanisms can preserve stability while reducing unnecessary updates, but most of them treat the controller gains as fixed after an offline design or updated through a prescribed sampled-data structure.}

Despite this important step, a critical computational bottleneck remains: at each triggering instant, the backstepping kernels for the frozen coefficients must still be computed. In~\cite{espitia_event-triggered_2025}, this is done by numerically solving e.g., via successive approximations the kernel equations at every event. Although events occur less frequently than continuous updates, kernel computation may still dominate the online cost that degrade performance when events become frequent (i.e., when the minimum inter-event time is small) or when the plant dynamics are fast. Hence, a key missing component for real-time deployment is a mechanism to generate accurate backstepping kernels almost instantaneously at triggering instants.  {In this sense, ETGS connects the resource-aware nature of ETC with the need to update stabilizing gains when plant parameters vary. For reaction-diffusion PDEs, event-triggered gain scheduling was developed to update controller gains only when the scheduling error becomes significant~\cite{KarafyllisEventtriggeredGain2021}.}

This paper addresses the above gap by integrating neural operators (NOs) into the event-triggered gain scheduling loop. Neural operators can approximate nonlinear operator mappings between function spaces.  {Their approximation-theoretic foundations are connected to universal approximation results for nonlinear operators~\cite{chen1995universal,lu2021learning}, with further analysis for DeepONet-type operator learning in PDE settings~\cite{deng2022approximation}.} It has begun to be used to accelerate gain scheduling~\cite{lamarque_gain_2025} and backstepping-related computations~\cite{bhan2023neural,wang_backstepping_2025} in PDE control.  {In PDE control, neural operators have also been used to approximate controller and observer gain functions for reaction-diffusion PDEs~\cite{KrsticNeuralOperators2024} and to handle first-order hyperbolic PIDE backstepping with recycle and delay~\cite{qi_neural_2024}, as well as hyperbolic PDEs with Markov-jumping parameters~\cite{zhang_operator_2025}. These developments indicate that operator learning can replace repeated online solution of kernel equations by an offline-trained surrogate, provided that the resulting approximation error is explicitly accounted for in the closed-loop analysis.}

However, neural operators for event-triggered gain scheduling of PDEs remain unexplored, motivating their use as fast surrogate solvers for the kernel computation.  {To the best of our knowledge, neural operators have not yet been combined with ETGS for this class of systems. The objective is therefore to preserve the stability and Zeno-free features of event-triggered gain scheduling while replacing the online kernel solver by a fast neural-operator surrogate and explicitly incorporating the approximation error into the Lyapunov analysis.}

\textbf{Contributions.}
We propose a NO-accelerated event-triggered gain scheduling framework for stabilizing $2\times2$ linear hyperbolic PDEs with time- and space-varying in-domain coupling coefficients. By training a neural operator to learn the mapping from the system parameters to the backstepping kernels, approximated kernels are obtained at triggering instants through trained NO with low online computational cost. The main contribution lies in the use of NO for fast kernel calculation at triggering events. We introduce a Lyapunov-based triggering condition design and prove the stability of the system with NO. It paves the way for real-time implementation of backstepping-based controllers for time-varying hyperbolic PDEs.

The paper is organized as follows: Section~\ref{sec:Pro} introduces the problem formulation and control objective. Section~\ref{sec:Preliminaries} gives the preliminaries of control design for hyperbolic system with time-independent coefficients, as well as preliminaries on neural operators. Section~\ref{sec:timevary} states  the problem in the time-varying coefficients system and introduces the gain-scheduling mechanism. Section~\ref{sec:ETGS} presents the NO-accelerated event-triggered gain scheduling design and the stability proofs, as well as Zeno-free guarantees. Section~\ref{sec:Sim} reports numerical simulations, and Section~\ref{sec:Con} concludes the paper and outlines future directions.

\textbf{Notation.}
For a scalar, vector, or matrix-valued function $f(\cdot)$ defined on a domain $\Omega$, $\|f\|$ denotes the $L^2$-norm, and $\|f\|_\infty$ is used interchangeably with $\|f\|_{L^{\infty}}$ when the domain is clear from the context.

\section{Problem formulation}\label{sec:Pro}
We consider a  $2 \times 2$ linear hyperbolic system described by the following PDEs
\begin{align}
 \partial_t u(t, x)+\lambda(x) \partial_x u(t, x)&=\sigma^{+}(t, x) v(t, x),\label{eq:sys1}\\
 \partial_t v(t, x)-\mu(x) \partial_x v(t, x)&=\sigma^{-}(t, x) u(t, x),
\end{align}
with boundary conditions
\begin{align}
 u(t, 0)&=q v(t, 0), \label{eq:sys3} \\
 v(t, 1)&=\rho u(t, 1)+U(t). \label{eq:sys4} 
\end{align}
where the spatial and time variables $(t,x)$ belong to $\{t \geq t_0, x \in [0,1] \}$, where $t_0 \geq 0$ is the initial time. The function $U(t)$ is the boundary control input to be designed. The velocities $\lambda(x)>0$ and $\mu(x)>0$ belong to $\mathcal{C}^1([0,1])$ with upper and lower bound, $\underline{\lambda} \leq \lambda(x) \leq \overline{\lambda}$, $\underline{\mu}\leq\mu(x)\leq \overline{\mu}$. {The boundary coupling $\rho$ and $q$ are constants and verify $\abs{\rho q} < 1$ to guarantee the delay-robustness of the closed-loop system~\cite{AuriolRobustOutput2020}}. We assume $q\ne 0$ to simplify the backstepping transformation we use later in this paper.  The time-varying coupling terms $\sigma^+$ and $\sigma^-$ belong to $\mathcal{C}^1([t_0,\infty)\times [0,1])$ $\cap$ $L^\infty([t_0,\infty)\times [0,1])$ and are assumed to be uniformly bounded, i.e., for all $t\geq t_0$ and $x\in [0, 1]$
\begin{align}\label{eq:assumption_bound_coupling_coeff}
    \abs{\sgp(t,x)}, \abs{\sgn(t,x)} < M_{\sigma},
\end{align}
where $M_\sigma$ is a positive constant. We make the following additional assumption.
\begin{assum}\label{assum:sig}
    There exist a constant $l_\sigma > 0$ such that for all $x \in [0, 1]$, and all $t,s \geq t_0$
    \begin{align}
        \abs{\sgp(t,x)- \sgp(s,x)}&\leq  l_\sigma \abs{t-s},\\
        \abs{\sgn(t,x)- \sgn(s,x)} &\leq l_\sigma \abs{t-s}.
    \end{align}
\end{assum}
This assumption means that the different time-varying coefficients are Lipschitz with respect to time.  The well-posedness of the closed-loop system will be assessed once the control law $U(t)$ is properly designed in Section~\ref{sec:timevary}.

The controller design  in~\cite{CoronBoundaryStabilization2021}  directly handles the time-varying coupling coefficients by applying suitable time-varying backstepping transformations, resulting in much more complex kernel equations to solve and high computational burden. In this paper, our objective is to stabilize the system~\eqref{eq:sys1}-\eqref{eq:sys4} using a suitable backstepping boundary controller while avoiding solving time-varying kernel equations by combining gain scheduling schemes and neural operators.
For a given sequence of time instants $\{t_j\}_{j\in\mathbb{N}}$, which will be characterized according to the event-triggered gain scheduling~(ETGS) mechanism,  we will consider that all the time-varying coefficients are constant at the time interval $[t_j,t_{j+1})$, their values being set to their respective values at time $t_j$. We finally apply neural operators to compute the backstepping kernels by learning the mapping from the system parameters to backstepping kernels.

\section{Preliminaries on boundary control design of linear hyperbolic PDEs with time-independent coefficients,  and Neural operators}\label{sec:Preliminaries}
{
In this section, we consider a nominal case for which the coupling parameters $\sgp(t,x)$ and $\sgn(t,x)$ are time-independent. They will be denoted as $\bar \sigma^+(x)$, $\bar \sigma^-(x)$. Then the nominal system with time-independent system parameters rewrites
\begin{align}
 \partial_t u_{\text{nom}}(t, x)+\lambda(x) \partial_x u_{\text{nom}}(t, x)&=\bar \sigma^{+}(x) v_{\text{nom}}(t, x),\label{eq:nomsys1}\\
 \partial_t v_{\text{nom}}(t, x)-\mu(x) \partial_x v_{\text{nom}}(t, x)&=\bar\sigma^{-}(x) u_{\text{nom}}(t, x),
\end{align}
with boundary conditions
\begin{align}
 u_{\text{nom}}(t, 0)&=q v_{\text{nom}}(t, 0), \\
 v_{\text{nom}}(t, 1)&=\rho u_{\text{nom}}(t, 1)+U_{\text{nom}}(t). \label{eq:nomsys4}
\end{align}
This nominal case will give us the opportunity to introduce key concepts and results for the time-varying case.}

\subsection{Nominal controller design}
We now recall the nominal stabilizing backstepping control law defined in~\cite{vazquez_backstepping_2011}. This control law is based on the backstepping transformation:
\begin{align}
    \alpha_{\text{nom}}(t,x)
    &= u_{\text{nom}}(t,x)
    + \int_0^x \Kuu_{\text{nom}}(x,\xi)u_{\text{nom}}(t,\xi)\,d\xi
    + \int_0^x \Kuv_{\text{nom}}(x,\xi)v_{\text{nom}}(t,\xi)\,d\xi,\label{eq:nominaltrans1}\\
    \beta_{\text{nom}}(t,x)
    &= v_{\text{nom}}(t,x)
    + \int_0^x \Kvu_{\text{nom}}(x,\xi)u_{\text{nom}}(t,\xi)\,d\xi
    + \int_0^x \Kvv_{\text{nom}}(x,\xi)v_{\text{nom}}(t,\xi)\,d\xi.\label{eq:nominaltrans2}
\end{align}
where $K^{\cdot\cdot}_{\text{nom}}$ are nominal backstepping kernels defined on the triangular domain $\mathcal{D}= \{0 \leq \xi \leq x \leq 1\}$. They are continuous function that satisfy~\cite{vazquez_backstepping_2011} 
\begin{align}
    \Delta(x) \partial_x K_{\text{nom}}(x,\xi) + \partial_\xi (K_{\text{nom}}(x,\xi)\Delta(\xi)) = K_{\text{nom}} \Sigma(\xi) \label{eq:nomker1},
\end{align}
with boundary conditions
\begin{align}
    \Delta(x) K_{\text{nom}}(x,x) - K_{\text{nom}}(x,x)\Delta(x) &= -\Sigma(x),\\
    K_{\text{nom}}^{uu}(x,0) &= \frac{\mu(0)}{\lambda(0) q}K_{\text{nom}}^{uv}(x,0),\\
    K^{vv}_{\text{nom}}(x,0) &= \frac{\lambda(0)q}{\mu(0)}K_{\text{nom}}^{vu}(x,0),\label{eq:nomker8}
\end{align}
where $K_{\text{nom}}(x,y):= \begin{pmatrix}
    \Kuu_{\text{nom}}(x,\xi) & \Kuv_{\text{nom}}(x,\xi)\\
    \Kvu_{\text{nom}}(x,\xi) & \Kvv_{\text{nom}}(x,\xi)
\end{pmatrix}$, $\Delta(x) := \begin{pmatrix}
    \lambda(x) & 0\\
    0 & \mu(x)
\end{pmatrix}$, and $\Sigma(x):=\begin{pmatrix}
    0 & \bar\sigma^+(x)\\
    \bar\sigma^-(x) & 0
\end{pmatrix}$.
The nominal control law is defined as:
\begin{equation}\label{eq:TimeIndControl}
\begin{aligned}
    U_{\text{nom}}(t) ={}& \int_0^1 (\rho\Kuu_{\text{nom}}(1,\xi)-\Kvu_{\text{nom}}(1,\xi))u_{\text{nom}}(t,\xi)\,d\xi
    + \int_0^1 (\rho\Kuv_{\text{nom}}(1,\xi)-\Kvv_{\text{nom}}(1,\xi))v_{\text{nom}}(t,\xi)\,d\xi.
\end{aligned}
\end{equation}
The transformation \eqref{eq:nominaltrans1}-\eqref{eq:nominaltrans2} with the control law~\eqref{eq:TimeIndControl} maps the system~\eqref{eq:nominaltrans1}-\eqref{eq:nominaltrans2} to the target system as
\begin{align}
    \partial_t \alpha_{\text{nom}}(t, x) + \lambda(x) \partial_x \alpha_{\text{nom}}(t, x) &= 0,\\
    \partial_t \beta_{\text{nom}}(t, x) - \mu(x) \partial_x \beta_{\text{nom}}(t, x) &= 0,\\
    \alpha_{\text{nom}}(t,0) &= q\beta_{\text{nom}}(t,0),\\
    \beta_{\text{nom}}(t,1) &= \rho \alpha_{\text{nom}}(t,1).
\end{align}
This target system is exponentially stable since $|\rho q|<1$. Therefore, due to the boundedness and invertibility of the Volterra transformation \eqref{eq:nominaltrans1}-\eqref{eq:nominaltrans2}, the nominal system~\eqref{eq:nomsys1}-\eqref{eq:nomsys4} is exponential stable in the sense of the $L^2$-norm under the nominal control law~\eqref{eq:TimeIndControl}~\cite[Lemma 2]{AuriolDelayRobustControl2018}.
\subsection{Properties of the kernels}
\begin{prop}[\cite{wang_backstepping_2025,di2013stabilization}]\label{lm:boundK}
    For every $\lambda,\mu\in \mathcal{C}^1([0,1])$, $\bar \sigma^+, \bar \sigma^- \in$ $\mathcal{C}^1([t_0,\infty]\times [0,1])$ $\cap$ $L^\infty([t_0,\infty)\times [0,1])$, and $q,\rho\in\mathbb{R}$, There exists $M_1>0$ such that for all nominal backstepping kernels $K^{\cdot\cdot}_{\text{nom}}$ satisfying the kernel equations~\eqref{eq:nomker1}-\eqref{eq:nomker8} admit a unique solution with the following bound
    \begin{align}
    \abs{K^{\cdot\cdot}_{\text{nom}}(x,\xi)}\le M_1, \forall(x,\xi)\in \mathcal{D}.
    \end{align}
\end{prop}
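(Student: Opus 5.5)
We will follow the classical route of \cite{vazquez_backstepping_2011,di2013stabilization}: recast the kernel equations \eqref{eq:nomker1}--\eqref{eq:nomker8} as a system of integral equations along characteristics, solve it by successive approximations, and read off a bound that depends only on the data. This bound is uniform over all coefficients with $\|\bar\sigma^\pm\|_\infty\le M_\sigma$.

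\textbf{Step 1 (characteristic form).} Write \eqref{eq:nomker1} componentwise. Each of the four kernels satisfies a first-order transport equation in $(x,\xi)$ on $\mathcal{D}$:
\begin{align*}
\lambda(x)\partial_x K^{uu} + \lambda(\xi)\partial_\xi K^{uu} &= -\lambda'(\xi)K^{uu} + \bar\sigma^-(\xi) K^{uv},\\
\lambda(x)\partial_x K^{uv} - \mu(\xi)\partial_\xi K^{uv} &= \mu'(\xi)K^{uv} + \bar\sigma^+(\xi) K^{uu},
\end{align*}
and analogously for $K^{vu}$ and $K^{vv}$. The diagonal condition $\Delta K - K\Delta = -\Sigma$ prescribes $K^{uv}(x,x)$ and $K^{vu}(x,x)$ through $\bar\sigma^\pm(x)/(\lambda(x)+\mu(x))$. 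The conditions at $\xi=0$ prescribe $K^{uu}$ and $K^{vv}$ in terms of $K^{uv}$ and $K^{vu}$, and they are well defined because $q\neq 0$. Since $\lambda,\mu\ge\min(\underline\lambda,\underline\mu)>0$, the characteristic curves are $\mathcal{C}^1$ and transversal to the boundary where data are given. Integrating along them turns the system into a fixed-point problem $K=\Phi_0+\mathcal{T}[K]$. Here $\Phi_0$ collects the boundary data, and $\mathcal{T}$ is a linear integral operator whose kernel is built from $\bar\sigma^\pm$, $\lambda'$, $\mu'$ and the Jacobians of the characteristic maps.

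\textbf{Step 2 (successive approximations).} Set $K^0=\Phi_0$ and $\Delta K^{n+1}=\mathcal{T}[\Delta K^n]$. By induction, $|\Delta K^n(x,\xi)|\le \bar M\,(Cx)^n/n!$ (or $(C(x+\xi))^n/n!$, depending on the parametrization). The constants $\bar M$ and $C$ depend only on $M_\sigma$, $\underline\lambda,\overline\lambda,\underline\mu,\overline\mu$, $\|\lambda'\|_\infty$, $\|\mu'\|_\infty$, and $|q|^{-1}$ through $\mu(0)/(\lambda(0)q)$ and $\lambda(0)q/\mu(0)$. The series therefore converges uniformly on $\mathcal{D}$ to a continuous solution, and
\[
|K^{\cdot\cdot}_{\text{nom}}(x,\xi)|\le \bar M e^{C}=:M_1 .
\]
The same estimate applied to the difference of two solutions, which solves the homogeneous system with $\Phi_0=0$, gives $|K_1-K_2|\le 0$. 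This yields uniqueness.

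\textbf{Main obstacle.} The delicate step is Step 2 near the corner $(0,0)$ and at the interplay of the boundary conditions. The $\xi=0$ condition reflects $K^{uv}$ into $K^{uu}$, so the characteristic for $K^{uu}$ may come from $\xi=0$, where its value depends on $K^{uv}$. The characteristic for $K^{uv}$ in turn comes from the diagonal. One must check that this reflection does not produce a non-contracting loop. We plan to handle it by ordering characteristics by arrival time, so that every reflected value is evaluated at a strictly "earlier" point; this preserves the factorial gain. Failing that, we would use a weighted norm $\sup e^{-\gamma x}|K|$ with $\gamma$ large, which makes $\mathcal{T}$ a contraction; that argument is in \cite{di2013stabilization}. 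Uniformity of $M_1$ in $\bar\sigma^\pm$ follows since only $M_\sigma$ enters $C$ and $\bar M$.
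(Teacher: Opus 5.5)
Your proposal is correct and is essentially the argument the paper relies on. The paper gives no proof of its own and defers to the characteristic/successive-approximation construction of \cite{vazquez_backstepping_2011,di2013stabilization}; in Step~1 of the proof of Lemma~\ref{lem:kernel_operator_lipschitz} it uses the same observation you make, namely that all constants in the Volterra representation depend only on the velocity bounds, $\|\lambda'\|_\infty,\|\mu'\|_\infty$, $M_\sigma$ and $|q|^{\pm1}$, which is what makes $M_1$ uniform over the frozen coefficients $\sigma^\pm(t_j,\cdot)$. Two clarifications: your componentwise system is the standard heterodirectional one with diagonal data over $\lambda+\mu$ (i.e.\ $\mathrm{diag}(\lambda,-\mu)$), which is the correct reading of the literal $\Delta=\mathrm{diag}(\lambda,\mu)$ in \eqref{eq:nomker1}; and the reflection issue you flag is resolved by substituting the characteristic integral for $K^{uv}(\cdot,0)$ (resp.\ $K^{vu}(\cdot,0)$) into the $\xi=0$ data before iterating, so that $\mathcal T$ is Volterra in $x$ and the factorial bound closes.
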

For the nominal backstepping transformation, we denote its compact form as $(\alpha_{\text{nom}}(t,x),\beta_{\text{nom}}(t,x))
    := (\mathcal{T}(u_{\text{nom}}(t,\cdot),v_{\text{nom}}(t,\cdot)))(x)$,
and its inverse transformation as $(u_{\text{nom}}(t,x),v_{\text{nom}}(t,x))
    := (\mathcal{T}^{-1}(\alpha_{\text{nom}}(t,\cdot),\beta_{\text{nom}}(t,\cdot)))(x)$.
We have the following proposition.
\begin{prop}[\cite{vazquez_backstepping_2011}]\label{lm:uniform_T}
There exists a constant $M_2\ge 1$, such that for all
$(u,v)\in(L^2([0,1]))^2$,
\begin{align}
\label{eq:uniform_T_bound}
\begin{aligned}
\norm{\mathcal{T} (u_{\text{nom}},v_{\text{nom}})}
&\le M_2 \norm{u_{\text{nom}}(t,\cdot),v_{\text{nom}}(t,\cdot)},\\
\norm{\mathcal{T}^{-1}(\alpha_{\text{nom}},\beta_{\text{nom}})}
&\le M_2 \norm{\alpha_{\text{nom}}(t,\cdot),\beta_{\text{nom}}(t,\cdot)}.
\end{aligned}
\end{align}
\end{prop}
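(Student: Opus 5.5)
The plan is to exploit the Volterra structure of $\mathcal{T}$ together with the uniform kernel bound of Proposition~\ref{lm:boundK}. I will first bound $\mathcal{T}$ directly. Write $\mathcal{T}=\mathrm{Id}+\mathcal{K}$, where $\mathcal{K}$ is the integral operator with matrix kernel $K_{\text{nom}}$ supported on $\mathcal{D}$. Proposition~\ref{lm:boundK} gives $|K^{\cdot\cdot}_{\text{nom}}(x,\xi)|\le M_1$ on $\mathcal{D}$. By Cauchy--Schwarz, for each $x$,
\begin{align*}
\left|\int_0^x K^{uu}_{\text{nom}}(x,\xi)u(\xi)\,d\xi+\int_0^x K^{uv}_{\text{nom}}(x,\xi)v(\xi)\,d\xi\right|\le \sqrt{2}\,M_1\norm{(u,v)}.
\end{align*}
The same bound holds for the $\beta$ component. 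Integrating over $x\in[0,1]$ then gives $\norm{\mathcal{T}(u,v)}\le (1+2M_1)\norm{(u,v)}$.

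For the inverse, I will use the standard fact that the inverse of a Volterra operator of the second kind with bounded kernel is again of the form $\mathrm{Id}+\mathcal{L}$. Here $\mathcal{L}$ has a continuous kernel $L$ on $\mathcal{D}$. I have two routes to this. The first is to solve the resolvent equation $L(x,\xi)=-K(x,\xi)-\int_\xi^x K(x,s)L(s,\xi)\,ds$ by successive approximations. By induction the $n$-th iterate is bounded by $M_1(2M_1)^n(x-\xi)^n/n!$, so the Neumann series converges and gives $|L|\le M_1 e^{2M_1}$. The second is to follow \cite{vazquez_backstepping_2011}: $L$ solves a hyperbolic kernel system of the same type as \eqref{eq:nomker1}--\eqref{eq:nomker8}, with $\Sigma$ replaced by $-\Sigma$, and is therefore bounded in the same way. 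The resolvent route is cleaner, since it only uses the bound from Proposition~\ref{lm:boundK} and no new PDE analysis. Repeating the Cauchy--Schwarz estimate then yields $\norm{\mathcal{T}^{-1}(\alpha,\beta)}\le (1+2M_1e^{2M_1})\norm{(\alpha,\beta)}$.

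Finally, set $M_2:=1+2M_1e^{2M_1}$. This satisfies $M_2\ge 1$ and $M_2\ge 1+2M_1$, so both inequalities in \eqref{eq:uniform_T_bound} hold.

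The main obstacle is uniformity. The constant must depend only on $M_1$, and hence only on the coefficient bounds such as $M_\sigma$ and the bounds on $\lambda,\mu$. It must not depend on the particular frozen coefficients $\bar\sigma^\pm$, because the result will later be applied at every triggering instant. The resolvent bound depends only on $M_1$, so it gives this uniformity provided Proposition~\ref{lm:boundK} is read with $M_1$ uniform over the admissible class of coefficients bounded by $M_\sigma$. I will state that reading explicitly.
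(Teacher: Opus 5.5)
Your proof is correct. The paper gives no argument for this proposition; it only cites \cite{vazquez_backstepping_2011}. Your write-up is therefore a self-contained version of the standard Volterra reasoning behind that citation, not a reconstruction of a proof in the text. The forward bound follows from the kernel bound and pointwise Cauchy--Schwarz. The inverse bound follows from your Neumann series for the resolvent kernel, whose terms you correctly bound by $M_1(2M_1)^n(x-\xi)^n/n!$, the factor $2$ coming from the $2\times 2$ matrix product. Since that series is a two-sided inverse by construction, no separate injectivity argument is needed.

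What your route adds over the citation is the explicit constant $M_2=1+2M_1e^{2M_1}$, which depends on the kernels only through $M_1$. The paper needs exactly this later but only asserts it: it applies Proposition~\ref{lm:uniform_T} to every frozen $\mathcal{T}_j$ in Lemma~\ref{lem:minidwell}, and claims in \eqref{eq:V_equiv} that $m_V,M_V$ are independent of $j$. Your explicit $M_2$, together with the bounds on $g$, supplies both.

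You do not need to reinterpret Proposition~\ref{lm:boundK} to get a uniform $M_1$. Step~1 of the proof of Lemma~\ref{lem:kernel_operator_lipschitz} already gives a bound $M_S$ on the $(\mathcal{C}^1(\mathcal{D}))^4$-norm of the kernels, uniform over $\Phi\in\mathcal{S}$. So $M_1:=M_S$ works as long as the frozen parameters stay in $\mathcal{S}$. That bound depends on the $\mathcal{C}^1$ bounds on $\lambda,\mu$ and on $\underline q$ as well, not only on $M_\sigma$.

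The alternative route you mention, bounding $L$ through its own hyperbolic kernel system, would also work. It would, however, require redoing the well-posedness and uniformity analysis for a second kernel system, which your resolvent argument avoids.
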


\subsection{Preliminaries of neural operators}\label{section:preliminaries_NO}
Neural operators are a class of machine learning models designed to learn mappings between infinite-dimensional function spaces~\cite{lu2021learning}.
They extend the concept of neural networks to operate on functions rather than finite-dimensional vectors, making them particularly suitable for approximating solution operators of PDEs and other functional relationships. {For a general operator mapping, neural operators can be applied to approximate the mapping using suitable neural network size parameterized by learned parameter weights. Based on the universal approximation theorem~\cite{bhan2023neural,chen1995universal}, we recall the following proposition}
\begin{prop}\label{Deeptheo}
    Let $X \subset \mathbb{R}^{d_x}$, $Y$ $\subset$ $\mathbb{R}^{d_y}$ be compact sets of vectors $\mathbf{x}\in X$ and $\mathbf{y}\in Y$. Let $\mathcal{U}$: $X \rightarrow \mathbb{U} \subset \mathbb{R}^{d_u}$ and $\mathcal{V}$: $Y \rightarrow \mathbb{V} \subset \mathbb{R}^{d_v}$ be sets of continuous functions $\mathbf{u}(\mathbf{x})$ and $\mathbf{v}(\mathbf{y})$, respectively. Let $\mathcal{U}$ be also compact. Assume the operator $\mathcal{G}$: $\mathcal{U} \rightarrow \mathcal{V}$ is continuous. Then, for all $\epsilon > 0$, there exists a $m^\star,p^\star \in \mathbb{N}$ such that for each $m \geq m^\star$, $p \geq p^\star$, there exist {trainable parameters} $\theta^{(\mathbf{k})}$, $\vartheta^{(\mathbf{k})}$, neural networks $f^{\mathcal{N}}\left(\cdot ; \theta^{(\mathbf{k})}\right), g^{\mathcal{N}}\left(\cdot ; \vartheta^{(\mathbf{k})}\right), \mathbf{k}= 1,\dots,p$ and $\mathbf{x}_j \in X, j=1, \ldots, m$, with corresponding $\mathbf{u}_m=\left(\mathbf{u}\left(\mathbf{x}_1\right), \mathbf{u}\left(\mathbf{x}_2\right), \cdots, \mathbf{u}\left(\mathbf{x}_m\right)\right)^{\top}$, such that
    \begin{align}
        \sup_{\mathbf{u}\in \mathcal{U}}\sup_{\mathbf{y}\in Y }
        \left|\mathcal{G}(\mathbf{u})(\mathbf{y})-\mathcal{G}_{\mathcal{N}}\left(\mathbf{u}_m\right)(\mathbf{y})\right|
        &<\epsilon,
    \end{align}
    for all functions $\mathbf{u} \in \mathcal{U}$ and all $\mathbf{y}\in Y$ of $\mathcal{G}(\mathbf{u})(\mathbf{y}) \in \mathcal{V}$.
\end{prop}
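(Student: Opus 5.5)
The plan is to reduce Proposition~\ref{Deeptheo} to the classical universal approximation theorem for nonlinear operators of Chen and Chen~\cite{chen1995universal}, in the DeepONet form~\cite{lu2021learning}. The approximating operator is sought as the branch--trunk product
\begin{align*}
\mathcal{G}_{\mathcal{N}}(\mathbf{u}_m)(\mathbf{y}) = \sum_{\mathbf{k}=1}^{p} g^{\mathcal{N}}(\mathbf{u}_m;\vartheta^{(\mathbf{k})})\, f^{\mathcal{N}}(\mathbf{y};\theta^{(\mathbf{k})}).
\end{align*}
The proof splits the error into three parts, each made smaller than $\epsilon/3$:
\begin{itemize}
\item the discretization of the input function;
\item the separable (finite-rank) approximation in the output variable $\mathbf{y}$;
\item the neural-network approximation of the resulting finite-dimensional coefficient maps.
\end{itemize}

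\textbf{Step 1 (output discretization).} Since $\mathcal{U}$ is compact and $\mathcal{G}$ is continuous, the image $\mathcal{G}(\mathcal{U})$ is compact in $C(Y)$. By Arzel\`a--Ascoli it is equicontinuous and bounded. Take a partition of unity $\{\phi_{\mathbf{k}}\}_{\mathbf{k}=1}^p$ on $Y$ subordinate to a fine mesh with nodes $\mathbf{y}_{\mathbf{k}}$. Then $\sup_{\mathbf{u}}\sup_{\mathbf{y}}|\mathcal{G}(\mathbf{u})(\mathbf{y})-\sum_{\mathbf{k}}\mathcal{G}(\mathbf{u})(\mathbf{y}_{\mathbf{k}})\phi_{\mathbf{k}}(\mathbf{y})|<\epsilon/3$ for $p\ge p^\star$. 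Refining the mesh preserves this bound, which gives the statement for every $p\ge p^\star$. Each $\phi_{\mathbf{k}}$ is continuous on the compact set $Y$, so a network $f^{\mathcal{N}}(\cdot;\theta^{(\mathbf{k})})$ approximates it uniformly; its contribution to the error is controlled using the uniform bound on $\mathcal{G}(\mathcal{U})$.

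\textbf{Step 2 (input discretization).} $\mathcal{U}$ is compact in $C(X)$, hence equicontinuous. Choose sensor points $\mathbf{x}_1,\dots,\mathbf{x}_m$ forming a $\delta$-net of $X$. Define the interpolation operator $\mathbf{u}\mapsto \mathcal{I}_m\mathbf{u}$ by a partition of unity on these points. Then $\sup_{\mathbf{u}\in\mathcal{U}}\|\mathbf{u}-\mathcal{I}_m\mathbf{u}\|_\infty\to0$ as $\delta\to0$. The set $\mathcal{U}\cup\bigcup_m\mathcal{I}_m(\mathcal{U})$ is compact, as in Chen--Chen, so $\mathcal{G}$ extends continuously to it and is uniformly continuous there. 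It follows that $\mathbf{u}\mapsto\mathcal{G}(\mathcal{I}_m\mathbf{u})(\mathbf{y}_{\mathbf{k}})$ is within $\epsilon/3$ of $\mathcal{G}(\mathbf{u})(\mathbf{y}_{\mathbf{k}})$ uniformly for $m\ge m^\star$. This map depends on $\mathbf{u}$ only through $\mathbf{u}_m$, and it is a continuous function on the compact set $\{\mathbf{u}_m:\mathbf{u}\in\mathcal{U}\}\subset\mathbb{R}^{m d_u}$. The classical finite-dimensional universal approximation theorem then yields a branch network $g^{\mathcal{N}}(\cdot;\vartheta^{(\mathbf{k})})$ approximating it within $\epsilon/(3pC)$, where $C$ bounds the trunk functions.

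\textbf{Step 3 (assembly).} Combine the three estimates by the triangle inequality to obtain the stated uniform bound below $\epsilon$.

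\textbf{Main obstacle.} The delicate point is Step~2. It requires that $\mathcal{G}$ is defined, and uniformly continuous, on the interpolants $\mathcal{I}_m\mathbf{u}$, which need not lie in $\mathcal{U}$. Two routes are available:
\begin{itemize}
\item \textbf{Extension route.} Use the Chen--Chen construction, showing that the union of $\mathcal{U}$ with all its interpolants is compact, and extend $\mathcal{G}$ there by Dugundji/Tietze.
\item \textbf{Direct route.} Work directly with the map $\mathbf{u}_m\mapsto$ (nearest element of $\mathcal{U}$). This is not a well-defined choice in general.
\end{itemize}
I will attempt the extension route first. For the application in this paper, the alternative is simply to cite~\cite{chen1995universal,lu2021learning} for this step.
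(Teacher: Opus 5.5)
Your proposal is correct and follows the argument the paper relies on. The paper gives no proof of this proposition; it recalls the result from Chen--Chen, as restated by Lu et al.\ and Bhan et al. Your sketch faithfully reconstructs that cited proof in three parts: a partition of unity in $\mathbf{y}$ approximated by trunk networks, sensor interpolation $\mathcal{I}_m$ with continuous extension of $\mathcal{G}$ to the compact set $\mathcal{U}\cup\bigcup_m\mathcal{I}_m(\mathcal{U})$, and the finite-dimensional universal approximation theorem for the branch networks.

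The extension route you chose does go through, and only the finitely many functionals $\mathbf{u}\mapsto\mathcal{G}(\mathbf{u})(\mathbf{y}_{\mathbf{k}})$ need extending, so Tietze suffices. The one loose end is bookkeeping in Step~3. There the input-discretization error is multiplied by $\sum_{\mathbf{k}}|f^{\mathcal{N}}(\mathbf{y};\theta^{(\mathbf{k})})|\le 1+p\max_{\mathbf{k}}\|f^{\mathcal{N}}(\cdot;\theta^{(\mathbf{k})})-\phi_{\mathbf{k}}\|_\infty$ rather than by $1$, which a slightly smaller tolerance absorbs.
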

Building on these foundational results, several applications of neural operators for PDE control have been developed. These include the approximation of backstepping gain kernels and control laws~\cite{bhan2023neural}, as well as observer gains~\cite{KrsticNeuralOperators2024}. The framework was later extended to backstepping kernels for hyperbolic PDEs~\cite{wang_backstepping_2025} and integrated into gain-scheduling methods~\cite{lamarque_gain_2025}.

\subsection{Existence and continuity of kernel operator}
{ 
For the nominal system \eqref{eq:nomsys1}-\eqref{eq:nomsys4}, we fix a compact admissible parameter set $\mathcal S$ contained in the following bounded admissible set $\mathcal A$:
\begin{equation}\label{eq:admissible_kernel_set}
\begin{aligned}
\mathcal A := \big\{&(\lambda,\mu,\bar\sigma^+,\bar\sigma^-,q,\rho):
\lambda,\mu,\bar\sigma^+,\bar\sigma^-\in \mathcal C^1([0,1]),\
\underline{\lambda}\leq\lambda(x)\leq\bar \lambda,\
\underline{\mu}\leq\mu(x)\leq\bar \mu,\\
&\|\lambda\|_{\mathcal C^1}+\|\mu\|_{\mathcal C^1}\le M_{\lambda\mu},\
\|\bar\sigma^+\|_{\mathcal C^1}+\|\bar\sigma^-\|_{\mathcal C^1}\le M_{\sigma,1},\
\abs{\bar\sigma^{\pm}(x)}\le M_\sigma,\\
&0<\underline q\le |q|\leq \bar q,\
\abs{\rho}\leq \bar \rho,\
|\rho q|\le 1-\eta
\big\},
\end{aligned}
\end{equation}
where $0 < \underline{\lambda} \leq \bar \lambda$, $0 < \underline{\mu} \leq \bar \mu$, $\underline q>0$, $\eta\in(0,1)$, and the remaining constants are positive. The compactness of $\mathcal S\subset\mathcal A$ is an admissibility requirement on the scheduling family used for operator learning. For instance, it is satisfied when the admissible coefficients are generated by a compact finite-dimensional parametrization; it is not a consequence of boundedness in $\mathcal C^1([0,1])$ alone.
}
We have the following lemma.
\begin{lem}\label{lem:kernel_operator_lipschitz}
For \(\Phi :=(\lambda,\mu,\bar\sigma^+,\bar\sigma^-,q,\rho)\in\mathcal S\), the kernel operator $\mathcal{K}:\mathcal{S}\to \big(\mathcal{C}^1(\mathcal{D})\big)^4$ is defined by
\begin{equation*}
    \mathcal{K}(\Phi)(x,\xi)
    := \big(\Kuu_{\text{nom}}(x,\xi),\Kuv_{\text{nom}}(x,\xi),\Kvu_{\text{nom}}(x,\xi),\Kvv_{\text{nom}}(x,\xi)\big).
\end{equation*}
It is locally Lipschitz. More precisely, there exists a constant $c_1>0$ such that for all
$\Phi_a, \Phi_b \in\mathcal{S}$,
\begin{align}\label{kernelop}
\norm{\mathcal K(\Phi_{a})-\mathcal K(\Phi_{b})}_{(\mathcal{C}^1(\mathcal{D}))^4}
&\leq c_1 \norm{\Phi_{a}-\Phi_{b}}_{\mathcal{S}},
\end{align}
where $\norm{\cdot}_{\mathcal{C}^1(\mathcal{D})}
    := \norm{\cdot}_{L^\infty(\mathcal{D})}
    + \norm{\partial_x(\cdot)}_{L^\infty(\mathcal{D})}
    + \norm{\partial_\xi(\cdot)}_{L^\infty(\mathcal{D})}$, $\norm{\Phi_a-\Phi_b}_{\mathcal{S}}
    := \norm{\lambda_a-\lambda_b}_{\mathcal{C}^1}
    + \norm{\mu_a-\mu_b}_{\mathcal{C}^1}
    + \norm{\bar\sigma^+_a-\bar\sigma^+_b}_{\mathcal{C}^1}
    + \norm{\bar\sigma^-_a-\bar\sigma^-_b}_{\mathcal{C}^1}
    + \abs{q_a-q_b}
    + \abs{\rho_a-\rho_b}$.
\end{lem}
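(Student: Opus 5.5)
We plan to prove the Lipschitz estimate by rewriting the kernel equations~\eqref{eq:nomker1}--\eqref{eq:nomker8} as a system of Volterra-type integral equations along characteristics. We then show that the fixed point of this system depends Lipschitz-continuously on the parameters, and do the same for its first derivatives.

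\textbf{Step 1: integral form along characteristics.} Each scalar kernel $K^{ij}_{\text{nom}}$ satisfies a first-order transport equation in $\mathcal D$. The transport speeds are $(\lambda(x),\lambda(\xi))$, $(\lambda(x),-\mu(\xi))$, $(\mu(x),-\lambda(\xi))$, $(\mu(x),\mu(\xi))$, up to signs, and the source is linear in $K_{\text{nom}}$ with coefficients built from $\bar\sigma^\pm$, $\lambda'$, $\mu'$. We integrate along the characteristic curves, which are defined through the $\mathcal C^1$ functions $\phi(x)=\int_0^x ds/\lambda(s)$ and $\psi(x)=\int_0^x ds/\mu(s)$. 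Each curve starts from the part of $\partial\mathcal D$ where data are prescribed: the diagonal $\xi=x$, where $K^{uv},K^{vu}$ equal $\mp\bar\sigma^{\pm}/(\lambda+\mu)$ and $K^{uu},K^{vv}$ are fixed by the boundary relations, or the axis $\xi=0$. Following~\cite{vazquez_backstepping_2011,di2013stabilization}, this gives $K=\mathcal F_\Phi[K]$. Here $\mathcal F_\Phi[K]=F_0(\Phi)+\mathcal L_\Phi K$, and $\mathcal L_\Phi$ is a linear integral operator whose $n$-th iterate is bounded by $C^n/n!$ in $L^\infty(\mathcal D)$. The constant $C$ is uniform over $\mathcal A$, because $\underline\lambda$, $\underline\mu$, $\underline q$, $M_{\lambda\mu}$ and $M_{\sigma,1}$ bound every coefficient. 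This is what underlies Proposition~\ref{lm:boundK}.

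\textbf{Step 2: $L^\infty$-Lipschitz dependence.} Take $\Phi_a,\Phi_b\in\mathcal S$ and set $\delta K=K_a-K_b$. Then
\begin{equation*}
\delta K=\mathcal L_{\Phi_a}\delta K+\big(F_0(\Phi_a)-F_0(\Phi_b)\big)+\big(\mathcal L_{\Phi_a}-\mathcal L_{\Phi_b}\big)K_b .
\end{equation*}
The characteristic curves, the integration endpoints and the coefficients all depend on $\Phi$ through $\lambda,\mu,1/\lambda,1/\mu,\phi,\psi,\phi^{-1},\psi^{-1},\bar\sigma^\pm$ and $1/q$. Each of these is Lipschitz in $\|\cdot\|_{\mathcal S}$ on $\mathcal A$: the lower bounds keep the inverses bounded, and the $\mathcal C^1$ bounds make composition Lipschitz. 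Together with $\|K_b\|_\infty\le M_1$, this shows that the last two terms are bounded by $C\|\Phi_a-\Phi_b\|_{\mathcal S}$. Inverting $I-\mathcal L_{\Phi_a}$ through the Neumann series, which converges with a uniform bound $\e^{C}$, gives $\|\delta K\|_\infty\le c\|\Phi_a-\Phi_b\|_{\mathcal S}$.

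\textbf{Step 3: derivatives.} We differentiate the integral equations in $x$ and $\xi$. The derivatives $\partial_xK$ and $\partial_\xi K$ satisfy integral equations of the same Volterra structure. Their source terms involve derivatives of the boundary data, such as $(\bar\sigma^\pm)'$, $\lambda'$ and $\mu'$, together with $K$ itself. This is where the main obstacle lies. The derivative equations contain first derivatives of the parameters, and differences of these are controlled only by $\|\cdot\|_{\mathcal C^1}$, which is exactly the norm in $\|\cdot\|_{\mathcal S}$. However, differentiating the characteristic maps and coefficients such as $\lambda'(x)/\lambda(x)$ would produce second derivatives of $\lambda,\mu$ or $\bar\sigma^\pm$, which are not controlled.

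To avoid this, we will not differentiate under the integral along the curves. Instead, we will use the transport equations themselves to express $\partial_xK$ and $\partial_\xi K$ in terms of $K$, the parameters, and one directional derivative taken along the characteristic. That directional derivative solves a Volterra equation whose data involve only $\Phi$, $\Phi'$ and $K$. Repeating the argument of Step 2 for this equation, with the Step 2 estimate for $\delta K$ as input, yields the $\mathcal C^1$ bound.

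If this bookkeeping fails for the mixed boundary conditions at $\xi=0$, the fallback is to prove continuity of $\mathcal K$ into $(\mathcal C^1(\mathcal D))^4$ by the same fixed-point argument. The uniform constant $c_1$ would then come from the compactness of $\mathcal S$ combined with the local Lipschitz bounds obtained above, which are uniform on $\mathcal A$.
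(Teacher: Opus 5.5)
Your Steps 1--2 follow the paper's Steps 1--3 closely. Step 3, however, has a genuine gap. The obstruction you identify there is real, and neither of your remedies removes it.

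\textbf{Why the Step 3 remedy fails.} The transport equation fixes only the derivative \emph{along} a characteristic, which equals the source. The transverse derivative must be carried from the tangential derivative of the boundary data, e.g.\ $\tfrac{d}{dy}\big[\bar\sigma^+/(\lambda+\mu)\big]$, evaluated at the foot point $y_\Phi(x,\xi)$ of the characteristic. That foot point moves at first order in $\lambda_a-\lambda_b$ and $\mu_a-\mu_b$. So whatever equation you write for the remaining derivative, its difference contains $(\bar\sigma_b^{+})'$ and $\partial K_b$ evaluated at two different points. This is $O(\|\Phi_a-\Phi_b\|_{\mathcal S})$ only if these derivatives are Lipschitz, i.e.\ for $W^{2,\infty}$ parameters.

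\textbf{A counterexample.} In fact \eqref{kernelop} fails for suitable compact $\mathcal S\subset\mathcal A$. Use the kernel equations of your Step 1, with diagonal value $-\bar\sigma^+/(\lambda+\mu)$. Take $\mu\equiv1$, $\lambda\equiv c\in[1,2]$, $\bar\sigma^-\equiv0$, $q=1$, $\rho=0$, and $\bar\sigma^+=s$ with $s'(y)=|y-\tfrac12|^{1/2}$. This is a compact family with $\|\Phi_a-\Phi_b\|_{\mathcal S}=|c_a-c_b|$.
\begin{itemize}
\item Then $K^{vu}\equiv K^{vv}\equiv0$ and $K^{uu}(x,\xi)=\tfrac1c K^{uv}(x-\xi,0)$.
\item Integrating along the $K^{uv}$-characteristic from the diagonal gives $K^{uv}(x,\xi)=-\tfrac{s(y_c)}{1+c}+I_c(x,\xi)$, with foot point $y_c=\tfrac{x+c\xi}{1+c}$.
\item $\partial_xI_c$ is Lipschitz in $c$: after the substitution $r=y_c-\tau$, $s'$ appears only integrated in $r$.
\item Hence $\partial_xK^{uv}=-\tfrac{s'(y_c)}{(1+c)^2}+\partial_xI_c$.
\end{itemize}
At $(x,\xi)=(\tfrac34,\tfrac14)$ with $c_b=1$ and $c_a=1+h$, one has $y_{c_b}=\tfrac12$ and $|y_{c_a}-\tfrac12|\approx h/8$. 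So the $\mathcal C^1$-distance of the two kernels is of order $\sqrt h$, not $h$.

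\textbf{Why the fallback fails.} Compactness upgrades \emph{local Lipschitz} bounds to a uniform constant, but it does not upgrade continuity to Lipschitz. The only local Lipschitz bounds you have are in $L^\infty$. What the fallback actually yields is continuity of $\mathcal K$ into $(\mathcal C^1(\mathcal D))^4$, uniform on the compact $\mathcal S$, not a constant $c_1$.

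\textbf{The paper's Step 4 has the same problem.} It applies ``the same subtraction argument'' to the differentiated kernel equations and asserts an inhomogeneity bounded by $C\|\Delta\Phi\|_{\mathcal S}$. With $\Phi_a$ as reference operator, that subtraction produces terms such as $(\lambda_a-\lambda_b)\partial_x^2K_b^n$. These are not available when the coefficients are only $\mathcal C^1$; it is the same moving-characteristics phenomenon as above.

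\textbf{A smaller gap in your Step 2.} Because the characteristics move with $(\lambda,\mu)$, bounding $(\mathcal L_{\Phi_a}-\mathcal L_{\Phi_b})K_b$ needs a uniform Lipschitz bound on $K_b$, not just $\|K_b\|_\infty\le M_1$. The paper supplies this through its uniform $\mathcal C^1$ bound $M_S$, which is what controls its source terms $(\lambda_a-\lambda_b)\partial_xK_b^n$. For $\mathcal C^1$ data you can obtain it from your Step 1, e.g.\ by working with $K(x,\xi)\Delta(\xi)$, whose equation contains no $\lambda'$ or $\mu'$.

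\textbf{What the argument can establish.} It gives:
\begin{itemize}
\item Lipschitz dependence in $(L^\infty(\mathcal D))^4$;
\item Lipschitz dependence in $(\mathcal C^1(\mathcal D))^4$ when $\lambda,\mu$ are held fixed, since foot points then do not move. This is the situation of Section~\ref{sec:timevary}, where only $\sigma_j^\pm$ change.
\item Lipschitz dependence in $(\mathcal C^1(\mathcal D))^4$ when $\mathcal S$ is bounded in $\mathcal C^{1,1}$;
\item continuity into $(\mathcal C^1(\mathcal D))^4$, which is all the subsequent Remark uses for universal approximation.
\end{itemize}
The statement should be restricted in one of these ways.
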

\begin{pf}
{ 
The proof is divided into four steps. Throughout the proof, write
$K^1=K^{uu}$, $K^2=K^{uv}$, $K^3=K^{vu}$, and $K^4=K^{vv}$ for the four scalar
kernel components.

\emph{Step 1: uniform kernel solvability on the admissible set.}
For each $\Phi\in\mathcal S$, the backstepping kernel equations
\eqref{eq:nomker1}--\eqref{eq:nomker8} admit a unique solution in
$(\mathcal C^1(\mathcal D))^4$ by the standard kernel well-posedness result
for $2\times2$ heterodirectional hyperbolic systems~\cite{vazquez_backstepping_2011}.
The characteristic construction depends on the transport speeds, the
in-domain coefficients, and the boundary parameter $q$ appearing in
\eqref{eq:nomker8}. Since $\mathcal S\subset\mathcal A$ imposes uniform lower
and upper velocity bounds, uniform $\mathcal C^1$ bounds on
$\lambda,\mu,\bar\sigma^+,\bar\sigma^-$, and $|q|\ge \underline q>0$, the
characteristic times, the boundary traces, and the coefficients in the
Volterra integral representation of the kernels are bounded by constants
independent of the particular choice of $\Phi$. Hence there exists $M_S>0$,
depending only on the constants defining $\mathcal A$ and on the compact set
$\mathcal S$, such that
\begin{align*}
\|\mathcal K(\Phi)\|_{(\mathcal C^1(\mathcal D))^4}\le M_S,
\qquad \Phi\in\mathcal S .
\end{align*}

\emph{Step 2: difference system.}
Let
\begin{align*}
\Phi_a=(\lambda_a,\mu_a,\bar\sigma_a^+,\bar\sigma_a^-,q_a,\rho_a),
\qquad
\Phi_b=(\lambda_b,\mu_b,\bar\sigma_b^+,\bar\sigma_b^-,q_b,\rho_b),
\end{align*}
and denote their kernels by $\mathcal K_a$ and $\mathcal K_b$. Set
\begin{align*}
\Delta\mathcal K:=\mathcal K_a-\mathcal K_b,\qquad
\Delta\Phi:=\Phi_a-\Phi_b .
\end{align*}
Subtracting the two kernel systems gives a linear first-order hyperbolic
system for $\Delta\mathcal K$ on $\mathcal D$. The boundary data of this
difference system are differences of the boundary data associated with
$\Phi_a$ and $\Phi_b$. The diagonal boundary data depend on
$\lambda,\mu,\bar\sigma^\pm$, and the boundary data on $\xi=0$ also contain
the factors $q^{-1}$ and $q$. These dependences are smooth on
$\mathcal A$ because the velocities are bounded away from zero and
$|q|\ge \underline q$. Therefore there is a constant $B_\partial>0$ such that
\begin{align*}
\|\Delta\mathcal K\|_{L^\infty(\partial\mathcal D)}
\le B_\partial\|\Delta\Phi\|_{\mathcal S}.
\end{align*}
To make the subtraction precise, use the kernel system with the coefficients
of $\Phi_a$ as the reference system. All terms containing
$\Delta\mathcal K$ are kept on the left-hand side and all remaining terms are
put into a source. Componentwise, each equation has the form
\begin{align*}
\mathscr L_a^m(\Delta K^m)(x,\xi)
=\sum_{n=1}^4 a_{mn}(x,\xi)\Delta K^n(x,\xi)+F_{ab}^m(x,\xi),
\qquad m=1,\ldots,4 .
\end{align*}
Here $\mathscr L_a^m$ denotes the first-order characteristic operator
obtained from the principal transport part of the $m$th scalar kernel equation
with the parameters $\Phi_a$. The coefficients $a_{mn}$ are the zero-order
coefficients multiplying the unknown differences $\Delta K^n$ after
subtraction. They are explicit combinations of
$\lambda_a,\mu_a,\lambda_a',\mu_a',\bar\sigma_a^+,\bar\sigma_a^-$, and are
therefore uniformly bounded on $\mathcal S$. The source term $F_{ab}^m$
contains only coefficient differences multiplying the already known kernels
associated with $\Phi_b$, for example
$(\lambda_a-\lambda_b)\partial_xK_b^n$,
$(\mu_a-\mu_b)\partial_\xi K_b^n$,
$(\lambda_a'-\lambda_b')K_b^n$,
$(\mu_a'-\mu_b')K_b^n$, and
$(\bar\sigma_a^\pm-\bar\sigma_b^\pm)K_b^n$. Since $\mathcal K_b$ is uniformly
bounded in $(\mathcal C^1(\mathcal D))^4$, there is a constant $B_f>0$,
depending only on $\mathcal S$, such that
\begin{align*}
\max_{1\le m\le4}\|F_{ab}^m\|_{\mathcal C^0(\mathcal D)}
\le B_f\|\Delta\Phi\|_{\mathcal S}.
\end{align*}

\emph{Step 3: Volterra estimate for the zeroth-order part.}
Solving each component equation along its characteristic gives a Volterra
representation of the form
\begin{align*}
\Delta K^m(x,\xi)
=\Delta K_\partial^m(x,\xi)
+\int_{\Gamma_m(x,\xi)}
\sum_{n=1}^4 a_{mn}(X_m(s),\Xi_m(s))
\Delta K^n(X_m(s),\Xi_m(s))\,ds
+\int_{\Gamma_m(x,\xi)}
F_{ab}^m(X_m(s),\Xi_m(s))\,ds .
\end{align*}
Here $\Delta K_\partial^m$ denotes the contribution of the boundary data
estimated above, and $\Gamma_m(x,\xi)$ is the characteristic segment for the
$m$th kernel component ending at $(x,\xi)$. The functions
$(X_m(s),\Xi_m(s))$ parametrize this segment. The uniform velocity bounds imply
that the lengths of these characteristic segments are uniformly bounded on
$\mathcal D$. For
$s\in[0,1]$, define the truncated Volterra triangle
\begin{align*}
\mathcal D_s:=\{(x,\xi)\in\mathcal D:\ 0\le \xi\le x\le s\},
\qquad
Y_0(s):=\|\Delta\mathcal K\|_{(L^\infty(\mathcal D_s))^4}.
\end{align*}
Taking componentwise absolute values and using the bounds above gives
\begin{align*}
Y_0(s)
\le C_0\|\Delta\Phi\|_{\mathcal S}
+C_1\int_0^sY_0(r)\,dr ,
\qquad s\in[0,1].
\end{align*}
The Volterra--Gronwall inequality then yields
\begin{align*}
\|\Delta\mathcal K\|_{(L^\infty(\mathcal D))^4}
\le C_0e^{C_1}\|\Delta\Phi\|_{\mathcal S}.
\end{align*}

\emph{Step 4: derivative estimates.}
It remains to estimate the first derivatives. In the standard characteristic
construction of the kernel equations, the first derivatives
$\partial_x K^m$ and $\partial_\xi K^m$ satisfy characteristic integral
equations obtained from the differentiated kernel system. Applying the same
subtraction argument to these integral equations gives equations for
$\partial_x\Delta K^m$ and $\partial_\xi\Delta K^m$. Their coefficients are
uniformly bounded because the velocities are separated from zero and
$\lambda,\mu,\bar\sigma^+,\bar\sigma^-$ have uniformly bounded
$\mathcal C^1$ norms on $\mathcal S$. Their inhomogeneous terms are bounded by
$C\|\Delta\Phi\|_{\mathcal S}$ plus lower-order terms already controlled by
the estimate for $Y_0$. Thus, with
\begin{align*}
Y_1(s):=\|\Delta\mathcal K\|_{(\mathcal C^1(\mathcal D_s))^4},
\end{align*}
the differentiated Volterra equations give
\begin{align*}
Y_1(s)
\le C_2\|\Delta\Phi\|_{\mathcal S}
+C_3\int_0^sY_1(r)\,dr ,
\qquad s\in[0,1],
\end{align*}
where $C_2,C_3>0$ depend only on $\mathcal S$. Applying Volterra--Gronwall
once more gives
\begin{align*}
Y_1(1)\le C_2e^{C_3}\|\Delta\Phi\|_{\mathcal S}.
\end{align*}
Therefore \eqref{kernelop} holds with $c_1:=C_2e^{C_3}$. This finishes the
proof of Lemma~\ref{lem:kernel_operator_lipschitz}.
}
\end{pf}
\begin{rem}
    The local Lipschitz continuity of $\mathcal K(\Phi)(x,\xi)$ established above implies that, on the fixed compact admissible set $\mathcal{S}$, $\mathcal K$ is continuous as a map into $(\mathcal{C}^1(\mathcal D))^4$.  Consequently, universal approximation results for operator-learning architectures (e.g., DeepONet), in the corresponding uniform operator topology, guarantee the existence of a neural operator to approximate the mapping.
\end{rem}
By using the neural operator, we obtain that for the compact admissible parameter set $\mathcal{S}$, there exists a neural operator $\widehat{\mathcal{K}}: \mathcal{S} \to (\mathcal{C}^1(\mathcal D))^4$, such that 
\begin{align}
    \sup_{\Phi\in\mathcal{S}} \norm{\mathcal{K}(\Phi) - \widehat{\mathcal{K}}(\Phi)}_{(\mathcal C^1(\mathcal D))^4} \leq \epsilon.
\end{align}
Equivalently, for every $\Phi=(\lambda(x),\mu(x), \bar\sigma^+,\bar\sigma^-, q, \rho)$, if we denote $\widehat{\mathcal{K}}(\Phi):=(\hKuu_{\text{nom}}, \hKuv_{\text{nom}}, \hKvu_{\text{nom}}, \hKvv_{\text{nom}})$, then the approximation errors $\widetilde{K}_{\text{nom}}^{\cdot\cdot} = K^{\cdot\cdot}_{\text{nom}} - \widehat{K}^{\cdot\cdot}_{\text{nom}}$ satisfy 
\begin{align}\label{eq_approx_error}
\|\widetilde{K}_{\text{nom}}^{\cdot\cdot}\|_{\mathcal{C}^1(\mathcal D)} \leq \epsilon.
\end{align}

By using the neural operator to approximate the mapping from system parameters to nominal kernels, the nominal system~\eqref{eq:nomsys1}-\eqref{eq:nomsys4} achieves exponential stability under the approximated nominal control law. This law is obtained by replacing the kernels in the original nominal control law~\eqref{eq:TimeIndControl}, as it has been done in~\cite{wang_backstepping_2025}. Even for PDEs with Markov-jumping parameters, mean-square exponential stability can be achieved, as demonstrated in previous work~\cite{zhang_operator_2025}.

\section{System (1)--(4) under event-triggered gain-scheduling mechanism}\label{sec:timevary}
Let us come back to system \eqref{eq:sys1}-\eqref{eq:sys4} involving  time- and space-varying coupling terms $\sgp(t,x)$ and $\sgn(t,x)$, which leads to a more complex  dynamics and controller design. To stabilize that system, the controller design incorporates an event-triggered gain scheduling method \cite{espitia_event-triggered_2025} together with neural operators with the aim to accelerate the control gain computation at each triggering time.
\subsection{Backstepping design for original system} \label{Sec_BS}
Let $\{t_j\},j\in \mathbb{N}$ be an increasing sequence of times. {We first denote the error when sampling as:}
\begin{align}
    \delsgp(t,x)&= \sgp(t,x)-  \sgp(t_j,x),\\
    \delsgn(t,x)&= \sgn(t,x)- \sgn(t_j,x).
\end{align}
{We also define the sampled version of different coupling coefficients in~\eqref{eq:sys1}-\eqref{eq:sys4} as $\sgp_j(x):= \sgp(t_j,x), \sgn_j(x) := \sgn(t_j,x)$.}
Therefore, at each triggered event, the original system~\eqref{eq:sys1}-\eqref{eq:sys4} rewrites as
\begin{align}
 \partial_t u(t,x)+\lambda(x) \partial_x u(t,x)&=\sgp_j(x)v(t,x) + \delsgp v(t,x),\label{eq:sys1del}\\
 \partial_t v(t,x)-\mu(x) \partial_x v(t,x)&=\sgn_j(x) u(t,x) + \delsgn u(t,x),
\end{align}
with boundary conditions
\begin{align}
 u(t,0)&=q v(0, t), \label{eq:sys3del}\\
 v(t,1)&=\rho u(1, t)+U(t). \label{eq:sys4del}
\end{align}
Consider the following backstepping transformation defined at each time interval $t\in [t_j,t_{j+1})$ and for all $x\in[0, 1]$ as:
\begin{align}
    \alpha_j(t,x)
    &= u(t,x)
    + \int_0^x \Kuu_j(x,\xi)u(t,\xi)\,d\xi
    + \int_0^x \Kuv_j(x,\xi)v(t,\xi)\,d\xi,\label{eq:trans1}\\
    \beta_j(t,x)
    &= v(t,x)
    + \int_0^x \Kvu_j(x,\xi)u(t,\xi)\,d\xi
    + \int_0^x \Kvv_j(x,\xi)v(t,\xi)\,d\xi.\label{eq:trans2}
\end{align}
which can be rewritten as in the compact form $(\alpha_j(t,x),\beta_j(t,x)) := (\mathcal{T}_j(u(t,\cdot),v(t,\cdot)))(x)$.
 The kernels $K^{\cdot\cdot}_j$ are continuous function defined on the same triangular domain $\mathcal{D}$.  They satisfy analogous equations to~\eqref{eq:nomker1}-\eqref{eq:nomker8}.

{The transformation~\eqref{eq:trans1}-\eqref{eq:trans2} is a Volterra transformation and is invertible. Consequently, there exist bounded functions $L^{\cdot\cdot}_j$ defined on the same triangular domain $\mathcal{D}$ such that for all $t\in [t_j,t_{j+1})$ and $x\in[0, 1]$ we have}
\begin{align}
    u(t,x)
    &= \alpha_j(t,x)
    + \int_0^x \Laa_j(x,\xi)\alpha_j(t,\xi)\,d\xi
    + \int_0^x \Lab_j(x,\xi)\beta_j(t,\xi)\,d\xi,\\
    v(t,x)
    &= \beta_j(t,x)
    + \int_0^x \Lba_j(x,\xi)\alpha_j(t,\xi)\,d\xi
    + \int_0^x \Lbb_j(x,\xi)\beta_j(t,\xi)\,d\xi.
\end{align}
{The inverse transformation can be written in compact form as $(u(t,x),v(t,x)) = (\mathcal{T}_j^{-1}(\alpha_j(t,\cdot),\beta_j(t,\cdot)))(x)$.}
For all $t\in[t_j,t_{j+1})$, we choose the boundary control input as
\begin{equation}\label{eq:ControOri}
\begin{aligned}
    U(t) ={}& \int_0^1 (\rho\Kuu_j(1,\xi)-\Kvu_j(1,\xi))u(t,\xi)\,d\xi
    + \int_0^1 (\rho\Kuv_j(1,\xi)-\Kvv_j(1,\xi))v(t,\xi)\,d\xi.
\end{aligned}
\end{equation}
In~\cite{espitia_event-triggered_2025}, it was demonstrated that this boundary control law~\eqref{eq:ControOri} exponentially stabilizes the system~\eqref{eq:sys1del}--\eqref{eq:sys4del}, provided that the sequence of triggering events is appropriately designed. Additionally, it was shown that the triggering mechanism avoids Zeno behavior, ensuring that no infinite number of triggering events occurs within any finite time interval. However, a significant computational challenge remains: the backstepping kernels must be recalculated at each triggering event, which can lead to a prohibitive computational burden. This is why we propose leveraging neural operators to approximate these kernels, thereby accelerating the computational process while preserving stability guarantees.

\subsection{Neural operators-approximated control law}
In this part, we will utilize the neural operators to approximate the mapping from the set of parameters in $\mathcal{S}$ to the backstepping kernels at each triggered-event. Let $\Phi_j := (\lambda(x),\mu(x), \sgp_j(x),\sgn_j(x), q, \rho)$ denote the system parameters at the triggered-event, then the corresponding backstepping kernels are denoted by $\mathcal{K}(\Phi_j)(x,\xi)$.
In what follows, we will denote with a superscript $\widehat\cdot$ the approximated kernels obtained using the neural operator at the triggered time. We will denote with a superscript $\widetilde \cdot$, the corresponding error kernels. More precisely, for any set of parameters in $\mathcal{S}$, we denote $\widehat{K}_j^{\cdot\cdot}$ the approximation of the backstepping kernels obtained using neural operators. The corresponding error kernels satisfies
\begin{align}
\widetilde K_j^{\cdot\cdot}(x,\xi):=K_j^{\cdot\cdot}(x,\xi)-\widehat K_j^{\cdot\cdot}(x,\xi).
\end{align}
Due to~\eqref{eq_approx_error}, we have
 \begin{align}\label{eq:kernel_approximation_bound}
\|\widetilde K_j^{\cdot\cdot}\|_{\mathcal C^1(\mathcal D)}\le \epsilon.
\end{align}
These approximated kernels can then be used to compute the NO-approximated control law.
Hence, the boundary value of the state in \eqref{eq:sys4}  (also in \eqref{eq:sys4del}) is modified and now given by
\begin{equation}
 v(t,1)=\rho u(1, t)+\widehat{U}(t). \label{eq:sys4del_withNO}
\end{equation}
where the approximated boundary control is given as follows:
\begin{equation}\label{eq:nocontrolaw}
\begin{aligned}
    \widehat{U}(t) ={}& \int_0^1 (\rho\hKuu_j(1,\xi)-\hKvu_j(1,\xi))u(t,\xi)\,d\xi
    + \int_0^1 (\rho\hKuv_j(1,\xi)-\hKvv_j(1,\xi))v(t,\xi)\,d\xi.
\end{aligned}
\end{equation}
for all $t\in[t_{j},t_{j+1})$.
\subsection{Target system}
Consider now the system~\eqref{eq:sys1del}-\eqref{eq:sys3del}, \eqref{eq:sys4del_withNO} with the NO-approximated control law~\eqref{eq:nocontrolaw}. Applying the backstepping transformations \eqref{eq:trans1}-\eqref{eq:trans2}, we obtain the following target system
\begin{align}
    \partial_t\alpha_j(t,x)+ \lambda(x) \partial_x\alpha_j(t,x)&= \heone(t,x) + r_{j,1}^\epsilon(t,x),\label{eq:tar1}\\
    \partial_t\beta_j(t,x)- \mu(x) \partial_x\beta_j(t,x)&= \hetwo(t,x) + \rtwo(t,x),\label{eq:tar2}
\end{align}
with boundary conditions
\begin{align}
    \alpha_j(t,0) &= q\beta_j(t,0),\label{eq:tar3}\\
    \beta_j(t,1) &= \rho \alpha_j(t,1)
    + \int_0^1 \big(\tKvu_j(1,\xi)-\rho \tKuu_j(1,\xi)\big)u(t,\xi)\,d\xi + \int_0^1 \big(\tKvv_j(1,\xi)-\rho \tKuv_j(1,\xi)\big)v(t,\xi)\,d\xi. \label{eq:tar4}
\end{align}
The perturbation terms $\heone$ and $\hetwo$ are defined as
\begin{align}
    \heone(t,x)
    &= \delta_{\sigma_j^+}(t,x)v
    + \int_0^x \widehat K_j^{uu}(x,\xi)\delta_{\sigma_j^+}(t,\xi)v(t,\xi)\,d\xi  + \int_0^x \widehat K_j^{uv}(x,\xi)\delta_{\sigma_j^-}(t,\xi)u(t,\xi)\,d\xi,\label{eq:e1}\\
    \hetwo(t,x)
    &= \delta_{\sigma_j^-}(t,x)u
    + \int_0^x \widehat K_j^{vu}(x,\xi)\delta_{\sigma_j^+}(t,\xi)v(t,\xi)\,d\xi  + \int_0^x \widehat K_j^{vv}(x,\xi)\delta_{\sigma_j^-}(t,\xi)u(t,\xi)\,d\xi.\label{eq:e2}
\end{align}
and the approximation error terms are defined as
\begin{align}
    \rone(t,x)
    &= \int_0^x \tKuu_j(x,\xi)\delta_{\sigma_j^+}(t,\xi)v(t,\xi)\,d\xi
    + \int_0^x \tKuv_j(x,\xi)\delta_{\sigma_j^-}(t,\xi)u(t,\xi)\,d\xi,\\
    \rtwo(t,x)
    &= \int_0^x \tKvu_j(x,\xi)\delta_{\sigma_j^+}(t,\xi)v(t,\xi)\,d\xi
    + \int_0^x \tKvv_j(x,\xi)\delta_{\sigma_j^-}(t,\xi)u(t,\xi)\,d\xi.
\end{align}
    In order to obtain an implementable triggering condition in the event-triggered gain-scheduling mechanism, we use NO-approximated kernels $\widehat{K}^{\cdot\cdot}$ and the approximation error $\epsilon$ to denote the right-hand side of target system~\eqref{eq:tar1}-\eqref{eq:tar2}.
\begin{rem}
    The target system~\eqref{eq:tar1}-\eqref{eq:tar4} exhibits perturbations in both the domain and boundary conditions, arising from the time-varying nature of the coupling terms and the neural operator approximation errors.
    The control law~\eqref{eq:ControOri} exponentially stabilizes the system in the sense of the $L^2$-norm in the absence of these perturbations from indomain coupling terms $\sgp(t,x),\sgn(t,x)$ and approximation error $\epsilon$.
    These perturbations can be effectively managed using Lyapunov-based techniques, as demonstrated in~\cite{espitia_event-triggered_2025}.
    The neural operator approximation error $\epsilon$ can be made arbitrarily small by choosing the suitable size of the neural operator (e.g., number of layers, neurons per layer) and training data.
\end{rem}

\subsection{Well-posedness issues}\label{subsec:wp}
We now address the well-posedness of the closed-loop system~\eqref{eq:sys1}--\eqref{eq:sys4} under the NO-approximated boundary control law \eqref{eq:nocontrolaw}. Since the controller gains are updated only at triggering times, the closed-loop system is piecewise defined over the inter-event intervals. We have the following lemma.
\begin{lem}
\label{lem:wp}
For any triggering sequence $\{t_j\}_{j\in\mathbb N}$ generated by the event-triggering mechanism, and for every initial condition $(u_0(x), v_0(x))\in (L^2([0,1]))^2$, the closed-loop system \eqref{eq:sys1}--\eqref{eq:sys4} with the NO-approximated control law \eqref{eq:nocontrolaw} admits a unique solution $(u,v)\in \mathcal C^0\big([0,\lim_{j\to \infty}(t_j));(L^2([0,1]))^2\big)$. 
\end{lem}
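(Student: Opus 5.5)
The argument proceeds interval by interval on $[t_j,t_{j+1})$ and concatenates the pieces. On each interval the kernels $\widehat K_j^{\cdot\cdot}$ are fixed functions in $\mathcal C^1(\mathcal D)$, so the closed loop is a linear $2\times2$ hyperbolic system. Its coefficients $\sigma^\pm(t,x)$ are bounded and $\mathcal C^1$, and its right boundary condition is a nonlocal but bounded linear functional of the state:
\begin{align*}
v(t,1)=\rho u(t,1)+\int_0^1 \big(\rho\hKuu_j(1,\xi)-\hKvu_j(1,\xi)\big)u(t,\xi)\,d\xi+\int_0^1 \big(\rho\hKuv_j(1,\xi)-\hKvv_j(1,\xi)\big)v(t,\xi)\,d\xi .
\end{align*}
Given $(u(t_j,\cdot),v(t_j,\cdot))\in(L^2([0,1]))^2$, I first establish existence and uniqueness of a solution in $\mathcal C^0([t_j,t_{j+1}];(L^2([0,1]))^2)$. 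I then take the terminal value $(u(t_{j+1}^-,\cdot),v(t_{j+1}^-,\cdot))$ as the initial datum for the next interval. Since the state is continuous in $L^2$ across $t_{j+1}$ (only the gains jump), the concatenation is continuous on $[t_0,\lim_j t_j)$, and uniqueness on each piece gives global uniqueness.

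For the interval-wise step I would not treat the nonlocal boundary term directly. Instead I would use the standard semigroup-plus-perturbation route. Step 1: the unperturbed operator $\mathcal A_0(u,v)=(-\lambda\partial_x u,\mu\partial_x v)$ on the domain $\{(u,v)\in (H^1)^2: u(0)=qv(0),\ v(1)=\rho u(1)\}$ generates a $C_0$-semigroup on $(L^2)^2$. This is classical, since the system is explicitly solvable along characteristics, or, equivalently, it follows because the target system is well-posed, which the paper already uses for exponential stability. Step 2: the in-domain coupling $\sigma^\pm(t,\cdot)$ is a bounded, time-dependent multiplication operator that is Lipschitz in $t$ by Assumption~\ref{assum:sig}. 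Standard results on bounded time-dependent perturbations (a Duhamel fixed point with a contraction in $\mathcal C^0([t_j,t_{j+1}];(L^2)^2)$) therefore give a unique mild solution.

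Step 3 handles the nonlocal boundary feedback, which is an unbounded (boundary) perturbation. This is the main obstacle, and the cleanest way around it is the backstepping change of variables. The Volterra transform $\mathcal T_j$ built from the exact kernels $K_j$ (or alternatively from $\widehat K_j$) is bounded and boundedly invertible on $(L^2)^2$; see Proposition~\ref{lm:uniform_T}, with uniform constants for parameters in $\mathcal S$ by Lemma~\ref{lem:kernel_operator_lipschitz}. It maps the closed loop to the target system \eqref{eq:tar1}--\eqref{eq:tar4}. The boundary condition at $x=1$ remains nonlocal only through the error kernels $\widetilde K_j$, via an integral term $\int_0^1 G_j(\xi)(\mathcal T_j^{-1}(\alpha_j,\beta_j))(\xi)\,d\xi$ with $G_j\in \mathcal C^0$.

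For such a boundary condition I would argue along characteristics. The value $\beta_j(t,1)$ feeds in only through the $\beta$-characteristic, and $\beta_j$ then needs time $\int_0^1 ds/\mu(s)>0$ to return. The solution can therefore be built by a fixed-point argument on short time windows of length $\tau<\min(\int_0^1 ds/\lambda(s),\int_0^1 ds/\mu(s))$. On each window the boundary integral depends on the state, which itself depends on the boundary value in an integrated, contraction-friendly way: its coefficient is $\|G_j\|_\infty$ times a small measure set. So the map is a contraction for $\tau$ small, uniformly in $j$ because $\|\widehat K_j\|_{\mathcal C^1}\le M_S+\epsilon$. Alternatively, one can cite the well-posedness results for linear hyperbolic systems with integral boundary terms used in \cite{espitia_event-triggered_2025}, which treats exactly the case $\epsilon=0$, and note that their argument only uses boundedness of the kernels. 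Either way, iterating over finitely many windows covers $[t_j,t_{j+1}]$.

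Finally, I would remark that the uniform step size $\tau$ makes the construction valid on the whole of $[t_0,\lim_j t_j)$ whether or not this limit is finite. Zeno-freeness, which would make the limit infinite, is established separately.
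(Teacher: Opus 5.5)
Your proposal is correct and has the same skeleton as the paper's proof. Both freeze the gains on $[t_j,t_{j+1})$, solve the resulting linear hyperbolic system, and concatenate the pieces, using that only the kernels, not the state, jump at triggering times.

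The routes differ in the interval-wise step. The paper only notes that the gains $\rho\hKuu_j(1,\cdot)-\hKvu_j(1,\cdot)$ and $\rho\hKuv_j(1,\cdot)-\hKvv_j(1,\cdot)$ are bounded by $(1+|\rho|)(M_1+\epsilon)$ uniformly in $j$. Hence $\widehat U$ is a bounded linear functional of the state. It then cites the well-posedness theorem for linear balance laws with bounded integral boundary feedback used in \cite{CoronBoundaryStabilization2021}.

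You prove that step instead:
\begin{itemize}
\item a $C_0$-semigroup for the transport part;
\item a bounded time-dependent perturbation for $\sigma^\pm$;
\item a short-window contraction along characteristics for the nonlocal boundary term, exploiting the transit time $\int_0^1 ds/\mu(s)$.
\end{itemize}
Your route is self-contained. It makes explicit why a nonlocal boundary feedback, which changes the generator's domain and is not a bounded perturbation on $(L^2)^2$, is harmless, and it gives a window length uniform in $j$. The paper's route is shorter but delegates exactly this point to a citation.

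Two refinements. First, the backstepping detour buys nothing. After the transformation, the boundary condition \eqref{eq:tar4} is still nonlocal, so you run the same contraction argument anyway. That argument works equally well directly on the original closed loop with the bounded gains above, and then you need no uniform bounds on the inverse kernels.

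Second, drop the alternative of building the transformation from $\widehat K_j$. The approximate kernels do not satisfy the kernel equations, in particular the condition at $\xi=0$. The transformed in-domain equations would then contain terms proportional to $\big(\lambda(0)q\widetilde K_j^{uu}(x,0)-\mu(0)\widetilde K_j^{uv}(x,0)\big)v(t,0)$, and analogous terms in the $\beta$-equation. These are boundary traces acting in the domain, not bounded perturbations on $(L^2)^2$, so your Steps 2--3 would not apply as stated. With the exact kernels $K_j$, as in your main line and in the paper, this issue does not arise.
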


\begin{pf}
{ 
The proof is divided into four steps. Let
\begin{align*}
T_\ast:=\lim_{j\to\infty}t_j .
\end{align*}

\emph{Step 1: frozen feedback form on an inter-event interval.}
Fix $j$ and consider $t\in[t_j,t_{j+1})$. On this interval, the kernels and
controller are frozen at $t_j$. The NO-approximated boundary input can be
rewritten as
\begin{equation}
\begin{aligned}
    \widehat{U}(t)
    &= \int_0^1 F_{1,j}(\xi)u(t,\xi)\,d\xi
    + \int_0^1 F_{2,j}(\xi)v(t,\xi)\,d\xi,
\end{aligned}
\end{equation}
where
\begin{align*}
F_{1,j}(\xi)=\rho\hKuu_j(1,\xi)-\hKvu_j(1,\xi),
\qquad
F_{2,j}(\xi)=\rho\hKuv_j(1,\xi)-\hKvv_j(1,\xi).
\end{align*}

\emph{Step 2: boundedness of the boundary feedback gains.}
By Proposition~\ref{lm:boundK} and the approximation bound
\eqref{eq:kernel_approximation_bound}, the approximated kernels satisfy
\begin{align*}
\max_{\star\in\{uu,uv,vu,vv\}}
\|\widehat K_j^\star\|_{L^\infty(\mathcal D)}
\le
\max_{\star}\|K_j^\star\|_{L^\infty(\mathcal D)}
+\max_{\star}\|\widetilde K_j^\star\|_{L^\infty(\mathcal D)}
\le M_1+\epsilon .
\end{align*}
Consequently,
\begin{align*}
|F_{1,j}(\xi)|
\le (1+|\rho|)(M_1+\epsilon),
\qquad
|F_{2,j}(\xi)|
\le (1+|\rho|)(M_1+\epsilon),
\end{align*}
for almost every $\xi\in[0,1]$, uniformly in $j$. Defining the piecewise
functions $F_i(t,\xi):=F_{i,j}(\xi)$ for $t\in[t_j,t_{j+1})$, $i=1,2$, gives
\begin{align*}
F_1,F_2\in L^\infty((0,T_\ast)\times(0,1)).
\end{align*}

\emph{Step 3: interval-wise well-posedness.}
The in-domain coefficients $\sgp,\sgn$ are bounded by
\eqref{eq:assumption_bound_coupling_coeff}. Moreover, the boundary input is a
bounded linear functional of the state. Indeed, by Cauchy--Schwarz,
\begin{align}
|\widehat U(t)|
&\le \|F_{1,j}\|_{L^\infty}\int_0^1|u(t,\xi)|\,d\xi
     +\|F_{2,j}\|_{L^\infty}\int_0^1|v(t,\xi)|\,d\xi \nonumber\\
&\le C_U\|(u(t,\cdot),v(t,\cdot))\|,
\end{align}
where
\begin{align*}
C_U:=2(1+|\rho|)(M_1+\epsilon).
\end{align*}
Thus, on each inter-event interval, the closed-loop plant is a linear
first-order hyperbolic balance law with bounded in-domain coefficients and a
bounded integral boundary feedback. The well-posedness theorem for such
linear hyperbolic systems with bounded boundary feedback, as used in
\cite{CoronBoundaryStabilization2021}, gives a unique solution on
$[t_j,t_{j+1})$ for the initial value inherited at $t_j$.

\emph{Step 4: concatenation across triggering times.}
At a triggering instant the plant state is not reset; only the controller
kernels are updated. Hence
\begin{align*}
(u,v)(t_{j+1}^{+},\cdot)=(u,v)(t_{j+1}^{-},\cdot)
\quad\text{in }(L^2([0,1]))^2 .
\end{align*}
The terminal state on one interval is therefore the initial state on the
next interval. By uniqueness on each interval, these interval-wise solutions
concatenate into a unique solution on $[0,T_\ast)$. This proves
\begin{align*}
(u,v)\in\mathcal C^0\big([0,T_\ast);(L^2([0,1]))^2\big),
\end{align*}
which is the claimed well-posedness result. This finishes the proof of
Lemma~\ref{lem:wp}.
}
\end{pf}

\section{Event-triggered gain scheduling via neural operators}\label{sec:ETGS}
In this section, we will introduce an event-triggering mechanism to determine the triggering times $\{t_j\}$ to update the backstepping kernels using the NO-approximated backstepping kernels.
The triggering condition relies on the evolution of errors when sampling the coupling terms i.e., $\widehat e_j(t,x)= (\heone(t,x),\hetwo(t,x))$ at each triggering time and a {Lyapunov-like functional}
$V((\alpha_j,\beta_j))$ defined as
\begin{align}
    V(t)
    = \int_0^1 \tfrac{\e^{-\nu\phi_1(x)}}{\lambda(x)}\alpha_j^2(t,x)\,dx
    + \int_0^1 a\tfrac{\e^{\nu\phi_2(x)}}{\mu(x)}\beta_j^2(t,x)\,dx := \inner{g(\cdot)\gamma_j(t,\cdot)}{\gamma_j(t,\cdot)},
\end{align}
where $a>0$, $\nu>0$, $\gamma_j(t,x)= (\alpha_j(t,x),\beta_j(t,x))$, and $g(x):= \bigg(\tfrac{\e^{-\nu\phi_1(x)}}{\lambda(x)}, a\tfrac{\e^{\nu\phi_2(x)}}{\mu(x)}\bigg)$, $\phi_1(x):= \int_0^x \tfrac{1}{\lambda(s)}\,ds$, $\phi_2(x):= \int_0^x \tfrac{1}{\mu(s)}\,ds$.
Next, for ease of notation, we use $V(t)=V(\gamma_j(t))=V((\alpha_j(t,\cdot),\beta_j(t,\cdot)))=V(w(t,\cdot))=\inner{g(\cdot)\mathcal{T}_j(w(t,\cdot))}{\mathcal{T}_j(w(t,\cdot))}$, and set $w(t,x):=(u(t,x),v(t,x))$ and $r^\epsilon_j(t,x):=(\rone(t,x),\rtwo(t,x))$.
{Due to the fact that the backstepping transformation is boundedly invertible, the Lyapunov functional $V((\alpha_j,\beta_j))$ is uniformly equivalent to the state norm such that there exist $m_V,M_V>0$, independent of $j$, such that for all $t\in[t_j,t_{j+1})$},
\begin{align}\label{eq:V_equiv}
m_V\norm{w(t,\cdot)}^2 &\le V(t) \le M_V \norm{w(t,\cdot)}^2.
\end{align}
Let $R_L\in (0, 1)$ be a design parameter, and define the following set
\begin{align}\label{eq:etcondition}
    {E_L(t_j) := \{t> t_j:
    2\inner{g(\cdot)\mathcal{T}_j(w(t,\cdot))}{\widehat{e}_j(t,\cdot)} > \nu R_L V(t)\}}
\end{align}
The times of events $t_j\geq 0$ with $t_0=0$ leads to a finite set of times, which is defined as
\begin{enumerate}
    \item if $E_L(t_j) = \emptyset$, then the set of the times of events is $\{t_0,t_1,\cdots,t_j\}$.
    \item if $E_L(t_j) \neq \emptyset$, then the next triggering time is defined as $t_{j+1} = \inf E_L(t_j)$.
\end{enumerate}

\subsection{Avoidance of Zeno phenomenon}
To prevent Zeno behavior, we establish a minimum inter-event time between consecutive triggering times.
This is to guarantee that there are no infinite triggering times that occur in a finite time interval
Specifically, we have the following lemma.
\begin{lem}\label{lem:minidwell}
Under Assumption~\ref{assum:sig} and the event-triggering condition \eqref{eq:etcondition}, there exists a minimum dwell-time $\tau_l>0$ such that for all $j\in\mathbb N$, the inter-event times satisfy
\begin{align}
t_{j+1}-t_j\ge \tau_l,
\label{tau-l-new}
\end{align}
where $\tau_l>0$ is independent of the initial condition.
\end{lem}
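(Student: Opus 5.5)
The plan is to bound the left-hand side of the triggering inequality in \eqref{eq:etcondition} by a quantity that vanishes as $t\to t_j^+$, uniformly with respect to the state, and that is proportional to $V(t)$. At $t=t_j$ we have $\delsgp(t_j,\cdot)=\delsgn(t_j,\cdot)=0$, so $\widehat e_j(t_j,\cdot)=0$ and the condition is inactive. On $[t_j,t_{j+1})$, Assumption~\ref{assum:sig} gives
\begin{align*}
\|\delsgp(t,\cdot)\|_\infty,\ \|\delsgn(t,\cdot)\|_\infty\le l_\sigma (t-t_j).
\end{align*}

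First, estimate $\|\widehat e_j(t,\cdot)\|$ using the definitions \eqref{eq:e1}--\eqref{eq:e2}. By Proposition~\ref{lm:boundK} and \eqref{eq:kernel_approximation_bound}, $\|\widehat K_j^{\cdot\cdot}\|_\infty\le M_1+\epsilon$ uniformly in $j$. Applying Cauchy--Schwarz to the Volterra integrals then yields
\begin{align*}
\|\widehat e_j(t,\cdot)\|\le c_e\, l_\sigma (t-t_j)\,\|w(t,\cdot)\|,
\end{align*}
where $c_e$ depends only on $M_1$ and $\epsilon$ (for instance $c_e=2(1+2(M_1+\epsilon))$ is safe). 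Next, bound the inner product. Since $g$ is bounded above by $\bar g:=\max\{\e^{0}/\underline\lambda,\, a\e^{\nu/\underline\mu}/\underline\mu\}$, and $\|\mathcal T_j w\|\le M_2\|w\|$ by Proposition~\ref{lm:uniform_T} (the uniform bound holds on $\mathcal S$ thanks to Step~1 of Lemma~\ref{lem:kernel_operator_lipschitz}), we get
\begin{align*}
2\inner{g\mathcal T_j(w)}{\widehat e_j}\le 2\bar g M_2 c_e l_\sigma (t-t_j)\|w(t,\cdot)\|^2\le \frac{2\bar g M_2 c_e l_\sigma}{m_V}(t-t_j)\,V(t),
\end{align*}
where the last step uses \eqref{eq:V_equiv}.

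Consequently, the triggering inequality can only hold when $\frac{2\bar g M_2 c_e l_\sigma}{m_V}(t-t_j)>\nu R_L$, provided $V(t)>0$. If $V(t)=0$, both sides vanish and the strict inequality fails, so no event occurs. Hence every $t\in E_L(t_j)$ satisfies $t-t_j>\tau_l$ with
\begin{align*}
\tau_l:=\frac{\nu R_L m_V}{2\bar g M_2 c_e l_\sigma},
\end{align*}
and taking the infimum gives $t_{j+1}-t_j\ge\tau_l$. This constant involves no initial data.

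The main obstacle is making sure that every constant ($M_1$, $M_2$, $m_V$, $\bar g$) is truly independent of $j$. That requires the sampled parameters $\Phi_j$ to stay in the compact set $\mathcal S$, so that the kernel and inverse-kernel bounds are uniform; I would state this explicitly as the standing assumption. A second subtlety is that the bound is only valid while the state remains in the same inter-event interval. This causes no trouble, because we only evaluate it up to the first time the condition holds, and the well-posedness result of Lemma~\ref{lem:wp} guarantees that $w$ is continuous in $L^2$ on that interval.
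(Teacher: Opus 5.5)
Your proposal is correct and follows the paper's route: bound $\|\widehat e_j(t,\cdot)\|\le c\, l_\sigma (t-t_j)\|w(t,\cdot)\|$ via Assumption~\ref{assum:sig} and the uniform kernel bounds, then use Cauchy--Schwarz with $\|g\|_\infty$, $M_2$ and \eqref{eq:V_equiv} to dominate the triggering quantity by $C_1(t-t_j)V(t)$, and compare with $\nu R_L V(t)$. The only difference is a slightly cleaner final step: you apply the estimate to every $t\in E_L(t_j)$ and take the infimum, which avoids the paper's continuity argument at $t_{j+1}$ and its separate treatment of the case $V(t_{j+1})=0$.
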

\begin{pf}
{ 
The proof is divided into three steps.

\emph{Step 1: triggering inequality at the next event.}
If $E_L(t_j)=\emptyset$, then no new event is generated after $t_j$ and there
is nothing to prove. Otherwise, let $t_{j+1}=\inf E_L(t_j)$. By the definition
of the triggering set and the continuity of the monitored quantities,
\begin{align}
\nu R_LV(t_{j+1})
\le
2\inner{g(\cdot)\mathcal{T}_j(w(t_{j+1},\cdot))}
{\widehat e_j(t_{j+1},\cdot)} .
\label{lemma5-1}
\end{align}

\emph{Step 2: growth bound for the monitored perturbation.}
First, the same Volterra estimate as in Proposition~\ref{lm:uniform_T}
applies uniformly to every frozen transformation $\mathcal T_j$, because
each frozen parameter vector belongs to the admissible set. Hence
\begin{align*}
\|\mathcal T_jw\|\le M_2\|w\|.
\end{align*}
By Proposition~\ref{lm:boundK} and \eqref{eq:kernel_approximation_bound},
the approximated kernels are uniformly bounded. We denote by 
\begin{align*}
    M_3 = \max_{j\in\mathbb{N}}\left\{\norm{\widehat{K}_j^{uu}}_{L^\infty(\mathcal D)}, \norm{\widehat{K}_j^{uv}}_{L^\infty(\mathcal D)}, \norm{\widehat{K}_j^{vu}}_{L^\infty(\mathcal D)}, \norm{\widehat{K}_j^{vv}}_{L^\infty(\mathcal D)}\right\} > 0,
\end{align*}
a constant depending only on this uniform kernel bound and on the fixed
Volterra integration domain. 

Assumption~\ref{assum:sig} gives, for $t\in[t_j,t_{j+1})$,
\begin{align*}
|\delta_{\sigma_j^\pm}(t,x)|
=|\sigma^\pm(t,x)-\sigma^\pm(t_j,x)|
\le l_\sigma(t-t_j),\qquad x\in[0,1].
\end{align*}
Using \eqref{eq:e1}, we obtain
\begin{align}
|\heone(t,x)|
&\le l_\sigma(t-t_j)|v(t,x)|
 +M_3l_\sigma(t-t_j)\int_0^x|v(t,\xi)|\,d\xi \nonumber\\
&\quad
 +M_3l_\sigma(t-t_j)\int_0^x|u(t,\xi)|\,d\xi .
\end{align}
For any $f\in L^2([0,1])$ and $x\in[0,1]$, Cauchy--Schwarz gives
\begin{align*}
\int_0^x|f(\xi)|\,d\xi
\le \left(\int_0^x1^2\,d\xi\right)^{1/2}
     \left(\int_0^x|f(\xi)|^2\,d\xi\right)^{1/2}
\le \|f\|.
\end{align*}
Taking the $L^2([0,1])$ norm in $x$ and absorbing the fixed numerical constants
from the two integral terms into $M_3$ gives
\begin{align}
\|\heone(t,\cdot)\|
&\le \widehat M_K\,l_\sigma(t-t_j)\|w(t,\cdot)\|.
\end{align}
where $\widehat M_K:=1+M_3$. The same argument applied to \eqref{eq:e2} gives the same bound for
$\hetwo$. Enlarging $M_3$, if necessary, by a fixed factor independent of
$j,t$, and the initial condition, the vector perturbation satisfies
\begin{align}
\|\widehat e_j(t,\cdot)\|
&\le \widehat M_K\,l_\sigma(t-t_j)\|w(t,\cdot)\|.
\end{align}
Combining this estimate with Cauchy--Schwarz gives
\begin{align}
2\inner{g(\cdot)\mathcal T_jw(t,\cdot)}{\widehat e_j(t,\cdot)}
&\le 2\|g\|_\infty
      \|\mathcal T_jw(t,\cdot)\|\,
      \|\widehat e_j(t,\cdot)\| \nonumber\\
&\le 2M_2\|g\|_\infty\widehat M_K
      l_\sigma(t-t_j)\|w(t,\cdot)\|^2 .
\end{align}
Using the lower norm equivalence in \eqref{eq:V_equiv},
\begin{align*}
\|w(t,\cdot)\|^2\le \frac{1}{m_V}V(t),
\end{align*}
we obtain
\begin{align}
2\inner{g(\cdot)\mathcal T_jw(t,\cdot)}{\widehat e_j(t,\cdot)}
\le C_1(t-t_j)V(t),
\label{lemma5-2}
\end{align}
where
\begin{align*}
C_1:=\frac{2M_2\|g\|_\infty\widehat M_Kl_\sigma}{m_V}.
\end{align*}

\emph{Step 3: dwell-time lower bound.}
If $l_\sigma=0$, then $\widehat e_j\equiv0$ and the triggering set is empty
unless the state is already at the zero solution. Thus no accumulation of
events can occur. For $l_\sigma>0$, evaluating \eqref{lemma5-2} at
$t=t_{j+1}$ and combining it with \eqref{lemma5-1} gives
\begin{align}
\nu R_LV(t_{j+1})
\le C_1(t_{j+1}-t_j)V(t_{j+1}).
\label{lemma5-3}
\end{align}
If $V(t_{j+1})=0$, then $w(t_{j+1},\cdot)=0$ by \eqref{eq:V_equiv}; the
closed-loop solution remains zero and no further event is needed. If
$V(t_{j+1})>0$, cancellation in \eqref{lemma5-3} yields
\begin{align}
t_{j+1}-t_j
\ge
\frac{\nu R_L}{C_1}
=: \tau_l>0.
\label{lemma5-4}
\end{align}
The lower bound depends only on the admissible constants and design
parameters, not on the initial condition or on $j$. Hence infinitely many
events cannot occur on a finite time interval. This finishes the proof of
Lemma~\ref{lem:minidwell}.
}
\end{pf}

\subsection{Stability analysis}
Next, we analyze the exponential stability of the closed-loop system consisting of the original system~\eqref{eq:sys1}-\eqref{eq:sys4} with the NO-approximated control law~\eqref{eq:nocontrolaw} under the event-triggering condition~\eqref{eq:etcondition}.

\begin{thm}\label{thm:stability}
    Under Assumption~\ref{assum:sig}, there exists $c_4>0$ and $C_r>0$, if the following condition is fulfilled,
    \begin{align}
        &c_4 < \frac{\e^{-\nu\phi_1(1)}}{a\rho^2\e^{\nu\phi_2(1)}}-1,\label{eq:c4condition}\\
        &l_\sigma < \frac{\nu R_L m_VC_\kappa + \nu R_L 2\norm{g}_\infty M_2 C_r \epsilon}{2M_2 \norm{g}_\infty \widehat{M}_K\ln{(\tfrac{M_V}{m_V})}}\label{eq:lcondition},
    \end{align}
    where $C_\kappa = \nu-\nu R_L - \tfrac{2}{m_V}(1+\tfrac{1}{c_4})a\e^{\nu\phi_2(1)}(1+\abs{\rho})^2 \epsilon^2$ and the approximation error $\epsilon$ is small enough, then for any initial condition $(u_0(\cdot),v_0(\cdot)) \in (L^2([0,1]))^2$, the system~\eqref{eq:sys1}-\eqref{eq:sys4} with the NO-approximated control law~\eqref{eq:nocontrolaw} under the event-triggering condition~\eqref{eq:etcondition} is exponentially stable in $L^2$-sense.
    Specifically, there exist constants $C_2 > 0$ and $\kappa_1 > 0$
    \begin{align}
        \norm{(u(t,\cdot),v(t,\cdot))}^2
        &\leq C_2 e^{-\kappa_1t} \norm{(u_0(\cdot),v_0(\cdot))}^2.
    \end{align}
\end{thm}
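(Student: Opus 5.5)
The plan is a Lyapunov argument on each inter-event interval $[t_j,t_{j+1})$ using $V$, followed by a bound on the jumps of $V$ at the triggering instants. The jumps come from switching $\mathcal T_j\to\mathcal T_{j+1}$ and are controlled by the dwell time of Lemma~\ref{lem:minidwell}.

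\emph{Step 1: derivative of $V$ on an interval.} Differentiate $V$ along \eqref{eq:tar1}--\eqref{eq:tar4} and integrate by parts using the weights $g$. The transport terms give
\begin{align*}
-\nu\int_0^1 \e^{-\nu\phi_1}\tfrac{\alpha_j^2}{\lambda}\,dx-\nu a\int_0^1\e^{\nu\phi_2}\tfrac{\beta_j^2}{\mu}\,dx\;\le\;-\nu V .
\end{align*}
The boundary terms at $x=0$ are
\begin{align*}
\alpha_j^2(t,0)-a\beta_j^2(t,0)=(q^2-a)\beta_j^2(t,0),
\end{align*}
which is nonpositive once $a\ge q^2$. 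The boundary terms at $x=1$ are
\begin{align*}
-\e^{-\nu\phi_1(1)}\alpha_j^2(t,1)+a\e^{\nu\phi_2(1)}\beta_j^2(t,1).
\end{align*}
In the second term, substitute \eqref{eq:tar4} and apply Young's inequality with parameter $c_4$:
\begin{align*}
\beta_j^2(t,1)\le(1+c_4)\rho^2\alpha_j^2(t,1)+\bigl(1+\tfrac1{c_4}\bigr)(1+\abs{\rho})^2\epsilon^2\cdot 2\norm{w}^2,
\end{align*}
using \eqref{eq:kernel_approximation_bound} and Cauchy--Schwarz. Condition \eqref{eq:c4condition} is exactly what makes the $\alpha_j^2(t,1)$ coefficient nonpositive. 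The leftover term is converted to $V$ through \eqref{eq:V_equiv}, which produces the $\epsilon^2$ part of $C_\kappa$. The in-domain term $2\inner{g\mathcal T_jw}{\widehat e_j}$ is bounded by $\nu R_L V$ between events, by the definition of $E_L(t_j)$. The remaining term $2\inner{g\mathcal T_jw}{r_j^\epsilon}$ is bounded by $2\norm{g}_\infty M_2 C_r\epsilon\, l_\sigma(t-t_j)\norm{w}^2$, since $r^\epsilon_j$ carries both a factor $\epsilon$ and a factor $\delta_{\sigma_j^\pm}$. Collecting, I expect
\begin{align*}
\dot V\le -C_\kappa V + (\text{an }\epsilon\text{-term})\,V
\end{align*}
on $[t_j,t_{j+1})$, with $C_\kappa>0$ for $\epsilon$ small. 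Hence
\begin{align*}
V(t_{j+1}^-)\le \e^{-\tilde C(t_{j+1}-t_j)}V(t_j),
\end{align*}
where $\tilde C$ absorbs the $C_r\epsilon$ contribution in the form that appears in the numerator of \eqref{eq:lcondition}.

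\emph{Step 2: jumps at events.} At $t_{j+1}$ the state $w$ is continuous (Lemma~\ref{lem:wp}), while the functional changes from $V(\mathcal T_jw)$ to $V(\mathcal T_{j+1}w)$. Applying \eqref{eq:V_equiv} twice gives
\begin{align*}
V(t_{j+1})\le \tfrac{M_V}{m_V}V(t_{j+1}^-).
\end{align*}
Combining with Step 1,
\begin{align*}
V(t_{j+1})\le \exp\!\bigl(\ln\tfrac{M_V}{m_V}-\tilde C(t_{j+1}-t_j)\bigr)V(t_j).
\end{align*}

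\emph{Step 3: the dwell time dominates the jumps.} This is the main obstacle. Lemma~\ref{lem:minidwell} gives $t_{j+1}-t_j\ge\tau_l=\nu R_L/C_1$, with $C_1$ proportional to $l_\sigma$. Condition \eqref{eq:lcondition} is designed to be equivalent to
\begin{align*}
\tilde C\tau_l>\ln(M_V/m_V),
\end{align*}
so each full cycle strictly contracts $V$. I would make this precise by writing
\begin{align*}
\ln\tfrac{M_V}{m_V}\le \theta\,\tilde C\,(t_{j+1}-t_j)\quad\text{for some }\theta<1.
\end{align*}
This gives a net decay rate $\kappa_1=(1-\theta)\tilde C$, and the bound holds for every $t$ once the intra-interval decay is included. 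The constant $C_r$ must be fixed precisely so that the $\epsilon$-term matches the algebraic form of \eqref{eq:lcondition}; I would track it as the constant bounding $\norm{r^\epsilon_j}$, and this bookkeeping is where I expect difficulty.

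If the event set becomes empty after finitely many events, only Step 1 applies on the final infinite interval, which gives exponential decay directly. Finally, \eqref{eq:V_equiv} converts the decay of $V$ into
\begin{align*}
\norm{w(t)}^2\le C_2\,\e^{-\kappa_1 t}\norm{w(0)}^2,
\end{align*}
with $C_2$ of the form $\tfrac{M_V}{m_V}$ times the one-jump factor. Solutions are global because the dwell time gives $t_j\to\infty$.
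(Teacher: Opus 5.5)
Your outline matches the paper's proof: the interval-wise derivative of $V$, Young's inequality with $c_4$ on the boundary residual, the trigger bounding the $\widehat e_j$ term by $\nu R_L V$, the jump factor $M_V/m_V$, and the dwell time dominating the jumps. Your $\theta$-contraction is equivalent to the paper's counting $j\le t/\tau_l$. The gap is in how you handle the $r_j^\epsilon$ term, and it has two parts.

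First, bounding $\abs{\delta_{\sigma_j^\pm}}$ by $l_\sigma(t-t_j)$ is a true inequality, but it is not usable here. It produces a coefficient that grows linearly in $t-t_j$, and the triggering rule bounds inter-event times only from below. So you do not get a constant rate $\tilde C$ on $[t_j,t_{j+1})$. On a final unbounded interval $[t_J,\infty)$, your estimate integrates to $V(t)\le\exp\bigl(-C_\kappa s+\tfrac12 c\,\epsilon\, l_\sigma s^2\bigr)V(t_J)$, with $s=t-t_J$ and $c:=2\norm{g}_\infty M_2C_r/m_V$. That is not decay, so your claim that Step~1 ``gives exponential decay directly'' there is false. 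You need the time-uniform bound $\abs{\delta_{\sigma_j^\pm}}\le 2M_\sigma$ from \eqref{eq:assumption_bound_coupling_coeff}, as in the paper. It gives $\norm{r_j^\epsilon(t,\cdot)}\le C_r\epsilon\norm{w(t,\cdot)}$ with $C_r=4M_\sigma$, hence $2\inner{g\mathcal T_jw}{r_j^\epsilon}\le C_3V$ with $C_3:=2\norm{g}_\infty M_2C_r\epsilon/m_V$.

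Second, the bookkeeping you were worried about cannot come out the way you intend. The term $C_3V$ enters $\dot V$ with a plus sign, so the correct estimate is $\dot V\le-(C_\kappa-C_3)V$. The smallness of $\epsilon$ is exactly what is needed for $\kappa:=C_\kappa-C_3>0$. With $\tau_l=\nu R_L/C_1$ from Lemma~\ref{lem:minidwell}, the requirement $\kappa\tau_l>\ln(M_V/m_V)$ is then equivalent to
\begin{align*}
l_\sigma<\frac{\nu R_L m_V C_\kappa-2\nu R_L\norm{g}_\infty M_2C_r\epsilon}{2M_2\norm{g}_\infty\widehat M_K\ln(\tfrac{M_V}{m_V})},
\end{align*}
that is, \eqref{eq:lcondition} with the sign of the $C_r\epsilon$ term reversed. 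Within this argument the approximation error can only slow the decay, so the $+$ in \eqref{eq:lcondition} cannot be reproduced. The paper's proof reaches it only by writing $\dot V\le-(C_\kappa+C_3)V$ right after establishing the bound $+C_3V$, which is a sign slip.

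With both corrections your proof goes through, with $\kappa_1=\kappa-\tau_l^{-1}\ln(M_V/m_V)$ and $C_2=M_V/m_V$. However, it establishes the theorem only under this stronger restriction on $l_\sigma$, not under \eqref{eq:lcondition} as stated.
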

\begin{pf}
{ 
The proof is divided into four steps.

\emph{Step 1: Lyapunov derivative on a fixed inter-event interval.}
Fix $j$ and $t\in[t_j,t_{j+1})$. Define the boundary residual induced by the
NO kernel approximation as
\begin{align}
d_j(t):=\int_0^1 \big(\tKvu_j(1,\xi)-\rho\tKuu_j(1,\xi)\big)u(t,\xi)\,d\xi
+\int_0^1 \big(\tKvv_j(1,\xi)-\rho\tKuv_j(1,\xi)\big)v(t,\xi)\,d\xi .
\end{align}
Then \eqref{eq:tar4} can be written as
\begin{align*}
\beta_j(t,1)=\rho\alpha_j(t,1)+d_j(t).
\end{align*}
Write
\begin{align*}
V(t)=V_\alpha(t)+V_\beta(t),
\end{align*}
where
\begin{align*}
V_\alpha(t)=\int_0^1\frac{\e^{-\nu\phi_1(x)}}{\lambda(x)}
\alpha_j^2(t,x)\,dx,\qquad
V_\beta(t)=\int_0^1a\frac{\e^{\nu\phi_2(x)}}{\mu(x)}
\beta_j^2(t,x)\,dx .
\end{align*}
Using \eqref{eq:tar1}, namely
$\partial_t\alpha_j=-\lambda\partial_x\alpha_j+\heone+\rone$, we get
\begin{align}
\dot V_\alpha(t)
&=2\int_0^1\frac{\e^{-\nu\phi_1(x)}}{\lambda(x)}
       \alpha_j(t,x)\partial_t\alpha_j(t,x)\,dx \nonumber\\
&=-2\int_0^1\e^{-\nu\phi_1(x)}
       \alpha_j(t,x)\partial_x\alpha_j(t,x)\,dx \nonumber\\
&\quad
 +2\int_0^1\frac{\e^{-\nu\phi_1(x)}}{\lambda(x)}
       \alpha_j(t,x)(\heone(t,x)+\rone(t,x))\,dx .
\end{align}
Since $2\alpha_j\partial_x\alpha_j=\partial_x(\alpha_j^2)$ and
$\phi_1'(x)=1/\lambda(x)$, integration by parts gives
\begin{align}
\dot V_\alpha(t)
&=-\nu V_\alpha(t)
-\e^{-\nu\phi_1(1)}\alpha_j^2(t,1)
+\alpha_j^2(t,0) \nonumber\\
&\quad
+2\int_0^1\frac{\e^{-\nu\phi_1(x)}}{\lambda(x)}
       \alpha_j(t,x)(\heone(t,x)+\rone(t,x))\,dx .
\end{align}
Similarly, using \eqref{eq:tar2}, namely
$\partial_t\beta_j=\mu\partial_x\beta_j+\hetwo+\rtwo$, and
$\phi_2'(x)=1/\mu(x)$, one obtains
\begin{align}
\dot V_\beta(t)
&=-\nu V_\beta(t)
+a\e^{\nu\phi_2(1)}\beta_j^2(t,1)-a\beta_j^2(t,0) \nonumber\\
&\quad
+2a\int_0^1\frac{\e^{\nu\phi_2(x)}}{\mu(x)}
       \beta_j(t,x)(\hetwo(t,x)+\rtwo(t,x))\,dx .
\end{align}
Combining the two identities, using
$\alpha_j(t,0)=q\beta_j(t,0)$ and
$\beta_j(t,1)=\rho\alpha_j(t,1)+d_j(t)$, yields
\begin{align}
\dot V(t)
&= -\nu V(t)+(q^2-a)\beta_j^2(t,0)
+a\e^{\nu\phi_2(1)}\big(\rho\alpha_j(t,1)+d_j(t)\big)^2
-\e^{-\nu\phi_1(1)}\alpha_j^2(t,1) \nonumber\\
&\quad
+2\inner{g(\cdot)\mathcal T_j(w(t,\cdot))}
{\widehat e_j(t,\cdot)+r^\epsilon_j(t,\cdot)} .
\end{align}

\emph{Step 2: boundary residual estimate.}
For any $c_4>0$, Young's inequality gives
\begin{align*}
\big(\rho\alpha_j(t,1)+d_j(t)\big)^2
\le (1+c_4)\rho^2\alpha_j^2(t,1)
+\left(1+\frac1{c_4}\right)d_j^2(t).
\end{align*}
We now bound $d_j(t)$. From \eqref{eq:kernel_approximation_bound},
\begin{align*}
|\tKvu_j(1,\xi)-\rho\tKuu_j(1,\xi)|
\le (1+|\rho|)\epsilon,
\qquad
|\tKvv_j(1,\xi)-\rho\tKuv_j(1,\xi)|
\le (1+|\rho|)\epsilon .
\end{align*}
Therefore, by Cauchy--Schwarz on $[0,1]$,
\begin{align}
|d_j(t)|
&\le (1+|\rho|)\epsilon
     \int_0^1 |u(t,\xi)|\,d\xi
 +(1+|\rho|)\epsilon
     \int_0^1 |v(t,\xi)|\,d\xi \nonumber\\
&\le (1+|\rho|)\epsilon
     \big(\|u(t,\cdot)\|+\|v(t,\cdot)\|\big) \nonumber\\
&\le \sqrt2(1+|\rho|)\epsilon\|w(t,\cdot)\|.
\end{align}
Hence
\begin{align}
d_j^2(t)\le 2(1+|\rho|)^2\epsilon^2\|w(t,\cdot)\|^2
\le \frac{2(1+|\rho|)^2}{m_V}\epsilon^2V(t),
\end{align}
where \eqref{eq:V_equiv} was used in the last inequality. Substituting this
bound into the previous derivative identity gives
\begin{align}
\dot V(t)
&\le -\left(\nu-\frac{2}{m_V}\left(1+\frac1{c_4}\right)
a\e^{\nu\phi_2(1)}(1+|\rho|)^2\epsilon^2\right)V(t) \nonumber\\
&\quad +(q^2-a)\beta_j^2(t,0)
+\big(a\e^{\nu\phi_2(1)}(1+c_4)\rho^2-\e^{-\nu\phi_1(1)}\big)
\alpha_j^2(t,1) \nonumber\\
&\quad
+2\inner{g(\cdot)\mathcal T_j(w(t,\cdot))}{\widehat e_j(t,\cdot)}
+2\inner{g(\cdot)\mathcal T_j(w(t,\cdot))}{r^\epsilon_j(t,\cdot)} .
\label{dotLypaunov}
\end{align}

\emph{Step 3: absorption of boundary and in-domain perturbations.}
The parameter $a$ is chosen so that $a>q^2$. Condition~\eqref{eq:c4condition}
is equivalent to
\begin{align*}
a\e^{\nu\phi_2(1)}(1+c_4)\rho^2-\e^{-\nu\phi_1(1)}<0 .
\end{align*}
Thus the two boundary trace terms in \eqref{dotLypaunov} are nonpositive.
Next, the triggering condition gives, for all $t\in[t_j,t_{j+1})$,
\begin{align*}
2\inner{g(\cdot)\mathcal T_j(w(t,\cdot))}{\widehat e_j(t,\cdot)}
\le \nu R_LV(t).
\end{align*}
It remains to estimate $r_j^\epsilon$. By
\eqref{eq:assumption_bound_coupling_coeff},
\begin{align*}
|\delta_{\sigma_j^\pm}(t,x)|
\le |\sigma^\pm(t,x)|+|\sigma^\pm(t_j,x)|
\le 2M_\sigma .
\end{align*}
Using the definition of $\rone$ and the bound
$\|\widetilde K_j^\star\|_{L^\infty(\mathcal D)}\le\epsilon$, we obtain
\begin{align}
|\rone(t,x)|
&\le 2M_\sigma\epsilon\int_0^x |v(t,\xi)|\,d\xi
     +2M_\sigma\epsilon\int_0^x |u(t,\xi)|\,d\xi \nonumber\\
&\le 2M_\sigma\epsilon
     \big(\|u(t,\cdot)\|+\|v(t,\cdot)\|\big) \nonumber\\
&\le 2\sqrt2M_\sigma\epsilon\|w(t,\cdot)\|.
\end{align}
Taking the $L^2([0,1])$ norm in $x$ gives the same bound for
$\|\rone(t,\cdot)\|$. The same calculation applies to $\rtwo$, and hence
\begin{align*}
\|r_j^\epsilon(t,\cdot)\|
\le 4M_\sigma\epsilon\|w(t,\cdot)\|.
\end{align*}
Set $C_r:=4M_\sigma$. Then
\begin{align}
2\inner{g(\cdot)\mathcal T_j(w(t,\cdot))}{r_j^\epsilon(t,\cdot)}
&\le 2\|g\|_\infty
      \|\mathcal T_jw(t,\cdot)\|\,
      \|r_j^\epsilon(t,\cdot)\| \nonumber\\
&\le 2\|g\|_\infty M_2 C_r\epsilon\|w(t,\cdot)\|^2 \nonumber\\
&\le \frac{2\|g\|_\infty M_2 C_r\epsilon}{m_V}V(t).
\end{align}
Let
\begin{align*}
C_3:=\frac{2\|g\|_\infty M_2 C_r\epsilon}{m_V}.
\end{align*}
Using the definition of $C_\kappa$ in the theorem statement, the preceding
estimates imply
\begin{align*}
\dot V(t)\le -(C_\kappa+C_3)V(t),\qquad t\in[t_j,t_{j+1}).
\end{align*}
The smallness of the approximation error $\epsilon$ is used here to ensure
\begin{align*}
\kappa:=C_\kappa+C_3>0.
\end{align*}
Therefore,
\begin{align}
V_j(t)\le \e^{-\kappa(t-t_j)}V_j(t_j),
\qquad t\in[t_j,t_{j+1}).
\label{eq:interval_decay}
\end{align}
where $V_j$ emphasizes that the Lyapunov functional is built with the kernels
frozen at $t_j$.

\emph{Step 4: iteration across triggering times.}
The physical state is continuous at triggering instants because no reset is
applied to the plant. The Lyapunov functional may change when the kernels are
updated, but the norm equivalence \eqref{eq:V_equiv} gives
\begin{align*}
V_{j+1}(t_{j+1})
\le M_V\|w(t_{j+1},\cdot)\|^2
\le \frac{M_V}{m_V}V_j(t_{j+1}^{-}).
\end{align*}
Set $R_V:=M_V/m_V$. Applying \eqref{eq:interval_decay} on each inter-event
interval and multiplying the resulting estimates gives
\begin{align*}
V_j(t_j)\le R_V^j\e^{-\kappa t_j}V_0(0).
\end{align*}
For $t\in[t_j,t_{j+1})$, another use of \eqref{eq:interval_decay} yields
\begin{align*}
V_j(t)\le R_V^j\e^{-\kappa t}V_0(0).
\end{align*}
Using $V_0(0)\le M_V\|w(0,\cdot)\|^2$ and
$\|w(t,\cdot)\|^2\le V_j(t)/m_V$, we get
\begin{align*}
\|w(t,\cdot)\|^2
\le R_V^{j+1}\e^{-\kappa t}\|w(0,\cdot)\|^2 .
\end{align*}
By Lemma~\ref{lem:minidwell}, $j\le t/\tau_l$. Consequently,
\begin{align}
\|w(t,\cdot)\|^2
\le \frac{M_V}{m_V}
\exp\!\left[-\left(\kappa-\frac1{\tau_l}\ln\frac{M_V}{m_V}\right)t\right]
\|w(0,\cdot)\|^2 .
\end{align}
The condition~\eqref{eq:lcondition}, together with the smallness requirement
on $\epsilon$, ensures
\begin{align*}
\kappa_1:=\kappa-\frac1{\tau_l}\ln\frac{M_V}{m_V}>0 .
\end{align*}
Taking $C_2:=M_V/m_V$ gives the claimed exponential estimate. 
\begin{align}
        \norm{(u(t,\cdot),v(t,\cdot))}^2
        &\leq C_2 e^{-\kappa_1t} \norm{(u_0(\cdot),v_0(\cdot))}^2.
    \end{align}
This completes the proof of Theorem~\ref{thm:stability}.
}
\end{pf}
\begin{rem}
    The decay rate $\kappa_1$ of the Lyapunov-like functional $V(t)$ is explicitly determined by the approximation error $\epsilon$. While the system attains its maximum convergence rate when $\epsilon=0$, such an error-free case is unrealistic for neural operators in practice. This highlights a trade-off between the convergence speed $\kappa_1$ and the approximation error $\epsilon$, achieving a smaller $\epsilon$ to accelerate convergence requires careful parameter tuning and rigorous training of the neural operator.
\end{rem}

\begin{rem}
    The event-triggering condition~\eqref{eq:etcondition} is purely Lyapunov-based and relies on the in-domain perturbation $e_j(t,x)$, which depends on the time-variation of the couplings $\sgp(t,x),\sgn(t,x)$. In addition, when the backstepping kernels are generated by a neural operator, the boundary actuation involves an approximation error $\epsilon$ and the subsequent Lyapunov estimate. To explicitly account for this effect, we could use another triggering mechanism that augments the threshold by a term proportional to $\epsilon^2$. For each $j\in\mathbb{N}$,
    \begin{align}\label{eq:etcondition_error}
    E_\epsilon(t_j)
    := \big\{t>t_j:
    2\inner{g(\cdot)\mathcal{T}_j(w(t,\cdot))}{\widehat e_j(t,\cdot)}>
    \nu R_L V(t)+R_\epsilon\,\epsilon^2 V(t)\big\}.
\end{align}
    where  $R_\epsilon>0$ is a design parameter. The triggering times $\{t_j\}_{j\in\mathbb{N}}$ are then generated by the following rules:
    \begin{enumerate}
    \item If $E_\epsilon(t_j)=\emptyset$, then the sequence stops and the set of triggering times is $\{t_0,t_1,\dots,t_j\}$.
    \item If $E_\epsilon(t_j)\neq\emptyset$, then the next triggering time is defined as
    $t_{j+1}:=\inf E_\epsilon(t_j)$.
    \end{enumerate}
    The Zeno behavior can be also proved to be excluded through the same methods with a large minimal dwell-time $\tau_\epsilon := \tfrac{\nu R_L+ R_\epsilon\epsilon^2}{ C_1}$,
    and the stability of the closed-loop system is obtained following the same manner in the proof of Theorem~\ref{thm:stability} with proper selection of $l_\sigma$.
\end{rem}

\begin{rem}\label{rem:tradeoff_eps}
Compared with~\eqref{eq:etcondition}, the term $R_\epsilon\epsilon^2V(t)$ in condition~\eqref{eq:etcondition_error}
leads to a larger dwell-time lower bound but also results in a decay rate that explicitly depends on
$\epsilon$. In practice, $R_\epsilon$ can be tuned to balance triggering frequency and robustness margin with respect to the neural operator approximation error.
\end{rem}

\section{Simulation}\label{sec:Sim}
Aligned with  Section \ref{section:preliminaries_NO}, we conduct first the training process based on the nominal system \eqref{eq:nomsys1}-\eqref{eq:nomsys4}. Recall that we use neural operator to approximate the operator mapping from \{$\lambda(x)$, $\mu(x)$, $\bar\sigma^+(x)$, $\bar\sigma^-(x)$, $q$, $\rho$\} to backstepping kernels $K^{\cdot\cdot}(x,\xi)$. More precisely, the neural operator is trained using a dataset of 2000 samples generated by solving the backstepping kernel equations numerically for various realizations of $\bar\sigma^{i,+}(x)$ and $\bar\sigma^{i,-}(x)$, $i=1,2,\dots,2000$. The finite difference method was adopted for solving backstepping kernels numerically. The dataset consists of random sampling of $\bar\sigma^{+}(x)$, $\bar\sigma^{-}(x)$ during the simulation period. Then the neural operator takes the input-output pair for training. The proposed neural operator, implemented via DeepXDE with a PyTorch backend, consists of a branch network mapping input functions and a trunk network processing spatial coordinates into a shared latent dimension $p$. Both networks utilize MLP architectures featuring three hidden layers with a constant width of 256 units each with ReLU activation. The final operator output is reconstructed by computing the inner product between the output of branch net $[b_1,b_2,\dots,b_p]$ and trunk net $[z_1,z_2,\dots, z_p]$ to generate scalar values across the coordinate domain. The structure is shown in Fig.~\ref{fig:nostruc}. $(\Phi^i(x_1),\Phi^i(x_2),\dots, \Phi^i(x_m))^\top$ is the input function of branch network corresponding to the function $(\mathbf{u}(\mathbf{x}_1), \mathbf{u}(\mathbf{x}_2),\dots, \mathbf{u}(\mathbf{x}_m))^\top$ in Proposition~\ref{Deeptheo}. $m$ is the data points of the finite representation of the input function $\Phi(x)$. $(x,\xi)$ is the input of trunk network which corresponding to the $\mathbf{y}$ in Proposition~\ref{Deeptheo}. The neural operator is trained using the Adam optimizer with a learning rate of 0.001 for 1000 epochs. The training process is performed on a Python running an Intel core i9-12900K central processing unit (CPU) with a clock rate of 3.60 GHz, and graphics processing unit (GPU) device, NVIDIA GeForce RTX 4090.

Then, we conduct simulations for system~\eqref{eq:sys1}-\eqref{eq:sys3},\eqref{eq:sys4del_withNO} with the following parameters: $\lambda(x) = 1$, $\mu(x) = 2$, $q = 1.2$, $\rho=0.3$, $\sigma^+(t,x)= 2 + 0.3x\e^{2\sin(t)}$, $\sigma^-(t,x)= 1 + \tfrac{2}{\cosh^2{(t-7)^2}} + 0.6\cos{(\pi t)} + \tfrac{2}{\cosh^2{(5x)}}$. The initial condition is chosen as $u_0(x) = 0.1 \sin(\frac{3 \pi x}{L}), v_0(x) = -0.1\sin( \frac{3\pi x}{L})$. We implement the NO-based ETGS method with  boundary control  \eqref{eq:nocontrolaw}, and under the event-triggering condition~\eqref{eq:etcondition} to update the kernels at discrete triggering times. The event-triggering parameters are set as $R_L = 0.1$.  We also use the modified triggering condition \eqref{eq:etcondition_error} (with $R_\epsilon = 0.1$) and compare with the results of the Lyapunov-based ETGS method proposed in~\cite{espitia_event-triggered_2025}. The total simulation time is $T=15$s.
\begin{table}[!tbp]
    \centering
    \caption{The event-trigger performance comparison}
    \begin{tabular}{c c c c c}
    \hline\hline
       Method  & Events & Min & Mean & ACT(ms) \\
       \hline \hline
       Methods in \cite{espitia_event-triggered_2025} & 44 & 0.04 & 0.34  & 208\\
       NO with~\eqref{eq:etcondition}  & 37 & 0.05 & 0.40  & 1.1\\
       NO with~\eqref{eq:etcondition_error}  & 25 & 0.07 & 0.60  & 1.1\\
       \hline \hline
    \end{tabular}
    \label{tab:comparison_ETC}
\end{table}

Table~\ref{tab:comparison_ETC} summarizes the performance comparison between the three event-triggered gain-scheduling methods. The results show that the NO-approximated kernels significantly reduce the average computation time(ACT) with 189$\times$ speedup compared with the numerical solving method, making the proposed method more efficient for real-time applications. Additionally, the NO-based ETGS with the triggering condition~\eqref{eq:etcondition_error} achieves fewer triggering times compared to the method with~\eqref{eq:etcondition}, demonstrating the effectiveness of incorporating the approximation error into the triggering mechanism.
\begin{figure}
    \centering
    \includegraphics[width=0.8\linewidth]{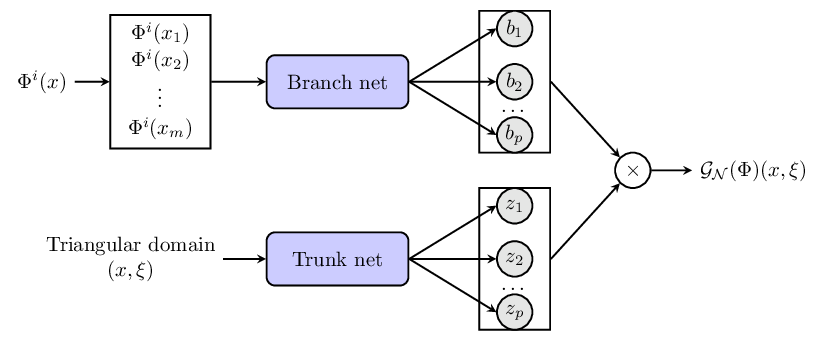}
    \caption{The structure of the neural operator}
    \label{fig:nostruc}
\end{figure}
\begin{figure}
    \centering
    \subfloat[State norm]{\includegraphics[width=0.4\linewidth]{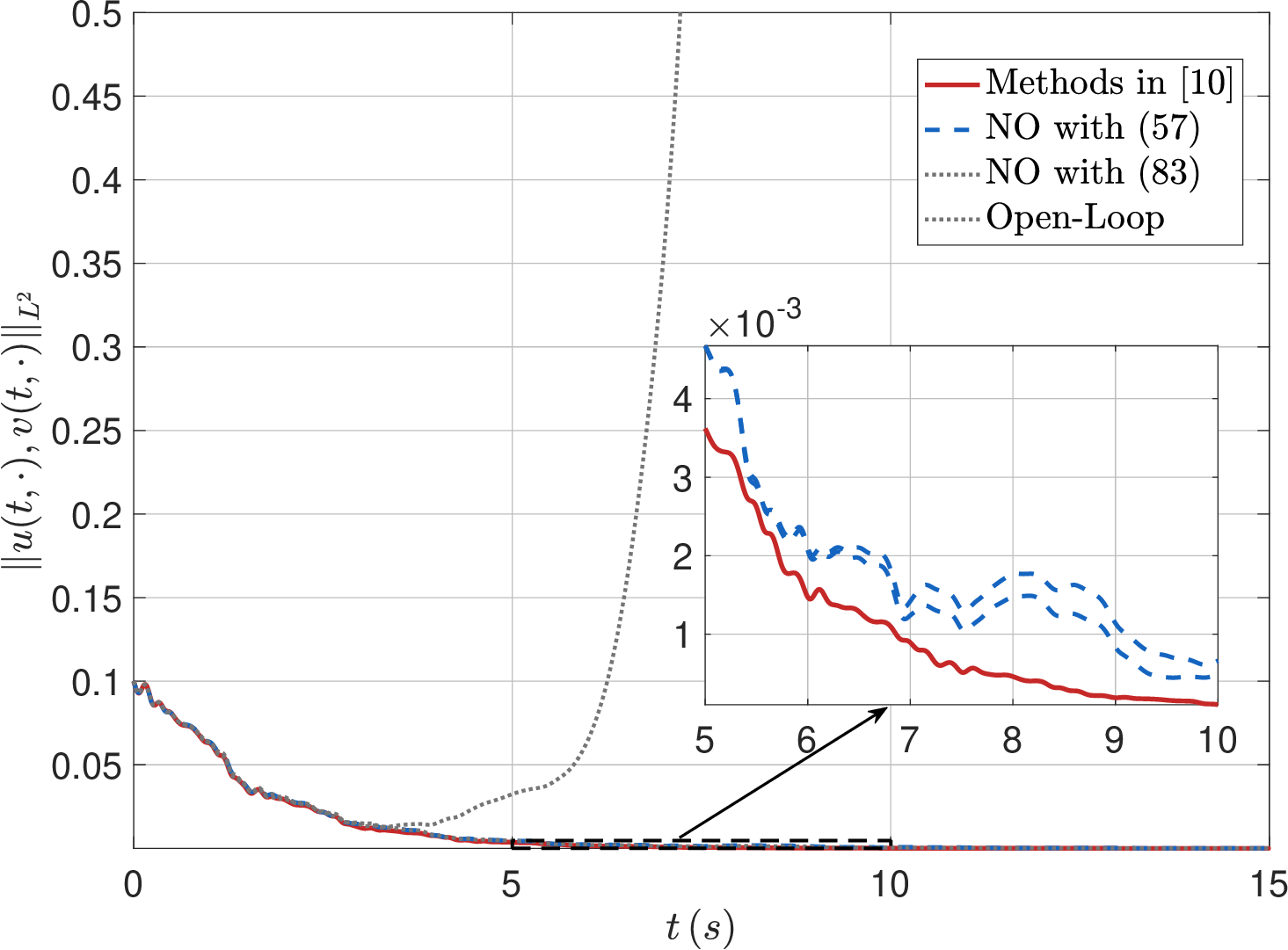}}
    \subfloat[Control input]{\includegraphics[width=0.42\linewidth]{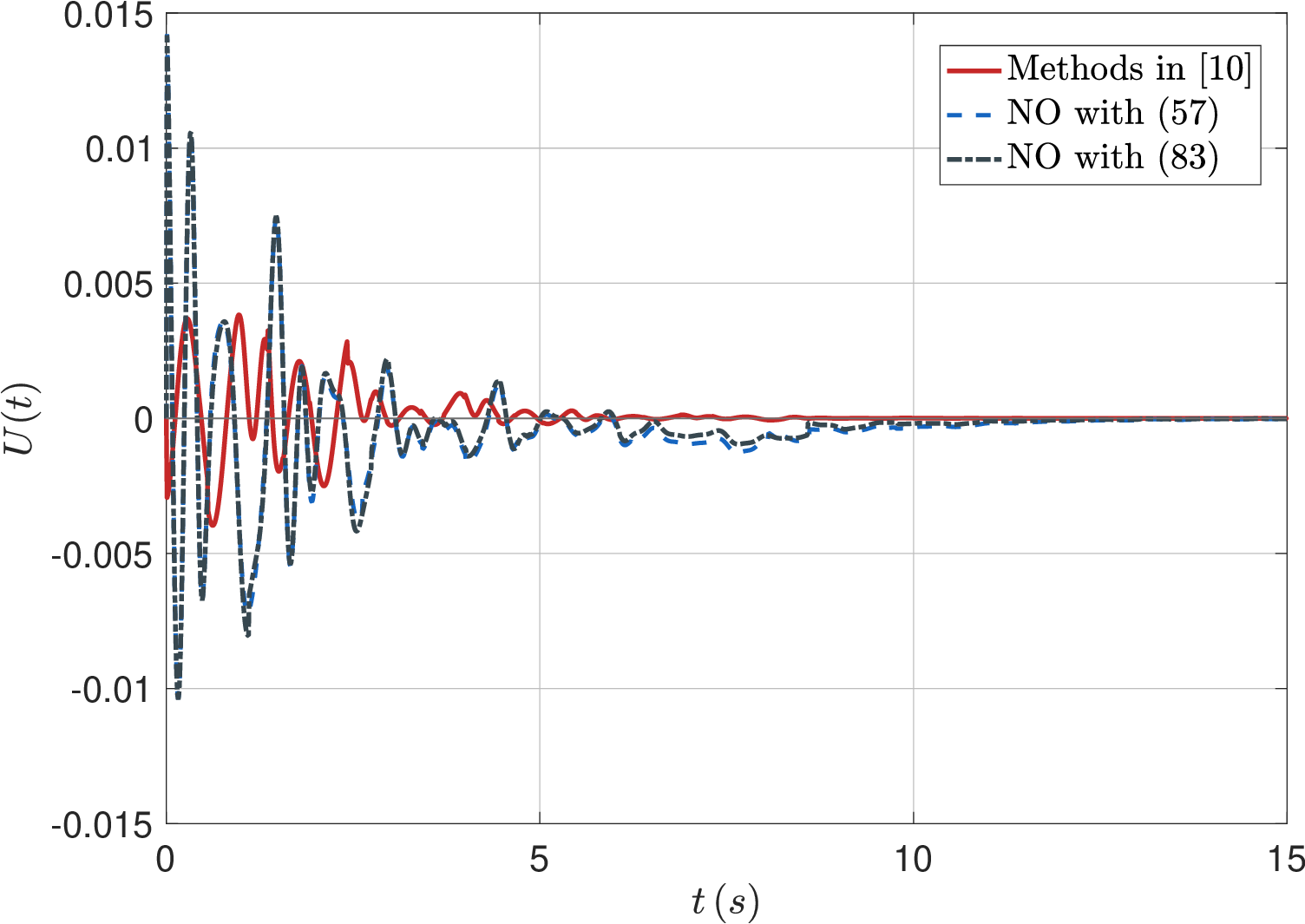}}
    \caption{The state norm evolution and control input evolution comparison between NO-based ETGS method and the gain-scheduling method in~\cite{espitia_event-triggered_2025} that updates the kernels at triggering time step.}
    \label{compareStateNorm}
\end{figure}
Figure~\ref{compareStateNorm} compares the state norm evolution and control input evolution between the NO-based ETGS and Lyapunov-based ETGS that updates the kernels at triggering time step. The results indicate that both methods achieve similar stabilization performance, with the NO-based ETGS method effectively maintaining system stability while reducing computational effort through fewer kernel updates. The control input comparison shows that both methods generate similar control inputs, demonstrating that the NO-based ETGS method effectively approximates the control law while reducing computational burden through fewer kernel updates.
\begin{figure}
    \centering
    \includegraphics[width=0.8\linewidth]{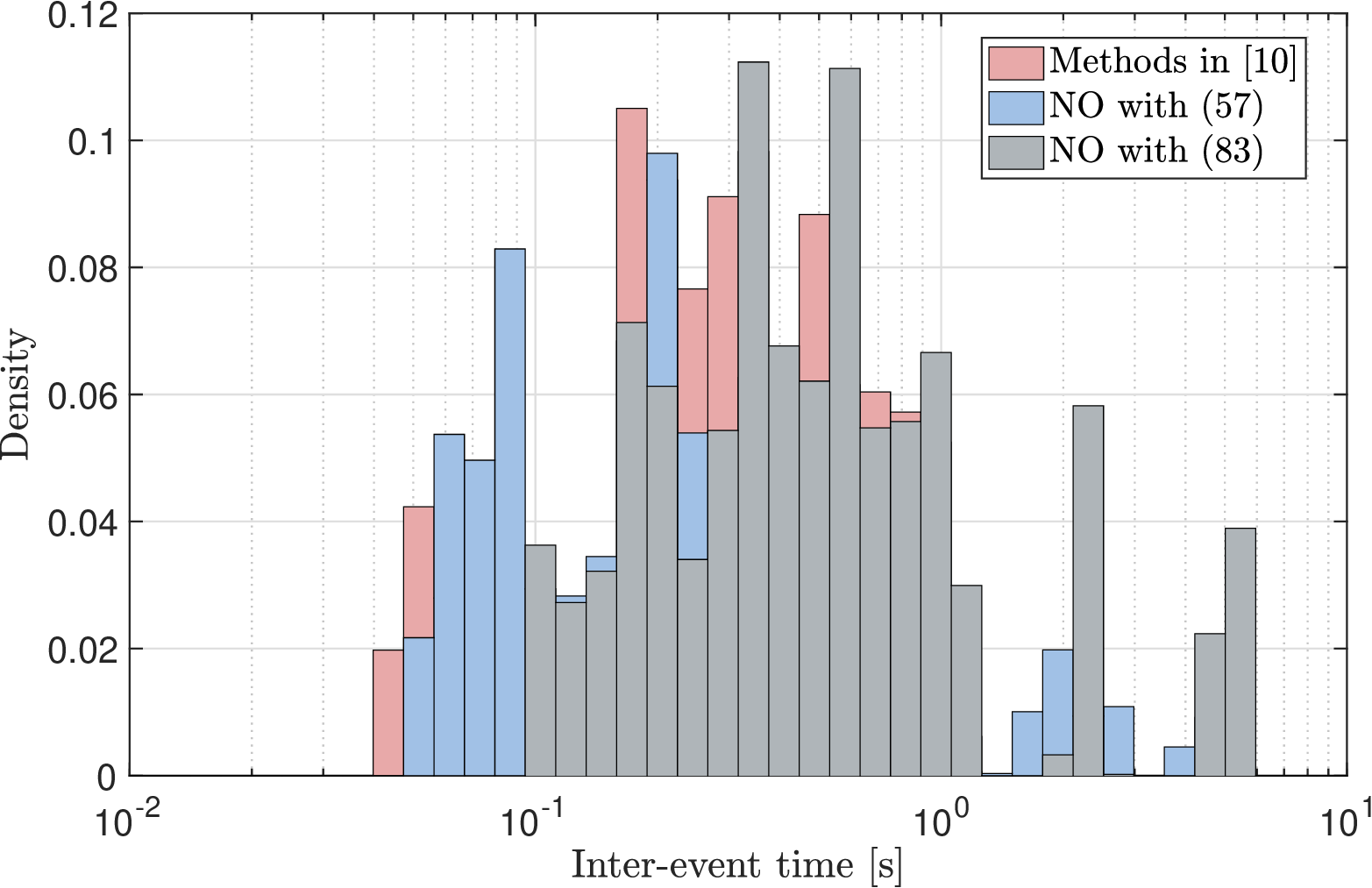}
    \caption{The triggering times and inter-event intervals under different event-triggering conditions.}
    \label{triggertime_comparison}
\end{figure}
{Figure~\ref{triggertime_comparison} shows the density of inter-execution times with 300 different initial conditions under three ETGS methods. It can be seen that the NO-based ETGS, with the triggering condition~\eqref{eq:etcondition_error}, leads to longer inter-execution times.}

\section{Conclusion}\label{sec:Con}
In this paper, we proposed a novel event-triggered gain scheduling method for $2 \times 2$ linear hyperbolic PDEs with time- and space-varying coefficients. The method leverages neural operators to approximate the backstepping kernels, enabling efficient real-time implementation of the control law. Simulation results demonstrated the effectiveness of the proposed method in stabilizing the system while significantly reducing computational effort compared to traditional numerical methods. Future work includes extending the proposed approach to more general classes of PDEs, and investigating the robustness of the method to model uncertainties and disturbances.

\bibliographystyle{abbrv}
\bibliography{reference}

\end{document}